\documentclass[sigconf,10pt,balance=false]{acmart}

\copyrightyear{2026}
\acmYear{2026}
\setcopyright{cc}
\setcctype{by}
\acmConference[MobiCom '26]{The 32nd Annual International Conference on Mobile Computing and Networking}{October 26--30, 2026}{Austin, TX, USA}
\acmBooktitle{The 32nd Annual International Conference on Mobile Computing and Networking (MobiCom '26), October 26--30, 2026, Austin, TX, USA}
\acmDOI{10.1145/3795866.3844487}
\acmISBN{979-8-4007-2505-0/2026/10}

\ccsdesc[500]{Networks~Wireless access networks}
\ccsdesc[500]{Networks~Mobile networks}
\ccsdesc[300]{Networks~Network protocols}

\author{Farzad Mehri}
\affiliation{%
  \institution{NC State University}
  \city{Raleigh}
  \country{USA}}
\email{fmehri@ncsu.edu}

\author{Dara Ron}
\affiliation{%
  \institution{NC State University}
  \city{Raleigh}
  \country{USA}}
\email{dron@ncsu.edu}

\author{Nishanth Sastry}
\affiliation{%
  \institution{University of Surrey}
  \city{Guildford}
  \country{UK}}
\email{n.sastry@surrey.ac.uk}

\author{Satyaki Roy}
\affiliation{%
  \institution{University of Alabama in Huntsville}
  \city{Huntsville}
  \country{USA}}
\email{sr0215@uah.edu}

\author{Vijay K. Shah}
\affiliation{%
  \institution{NC State University}
  \city{Raleigh}
  \country{USA}}
\email{vijay.shah@ncsu.edu}

\renewcommand{\shortauthors}{Mehri, Ron, Sastry, Roy, Shah.}

\usepackage[english]{babel}
\usepackage{blindtext}

\usepackage{amsthm}
\usepackage{dblfloatfix}
\usepackage{algorithm}
\usepackage{algorithmic}

\usepackage{amsmath}
\usepackage{wrapfig}
\usepackage{graphicx}
\usepackage{textcomp}
\usepackage{xcolor}
\usepackage{multicol}
\usepackage{enumitem}
\usepackage{soul}
\usepackage{subcaption}
\usepackage{url}

\usepackage{tikz}
\usepackage{circledtext}
\usepackage{pifont}
\usepackage{svg}

\newcommand{\T}{\mathcal{T}}

\newcommand{\C}{C}

\newcommand{\NSites}{$25$\xspace}
\newcommand{\PImprovement}{$90.68$\xspace}

\newcommand{\sysabb}{\textsc{SkyShare}\xspace}
\newcommand{\algname}{\textsc{SkySched}\xspace}

\newcommand{\bstep}[1]{
  \tikz[baseline=(char.base)]{
    \node[shape=circle, fill=black, inner sep=1pt] (char)
    {\color{white}\scriptsize\bfseries #1};
  }%
}

\begin{document}

\title[\sysabb]{\sysabb: Constellation-wide Sky Sharing for LEO-Radio Astronomy Coexistence}

\begin{abstract}

Rapidly growing low-Earth-orbit (LEO) constellations increasingly operate in the spectrum shared with radio astronomy services (RAS), creating escalating interference risks for sensitive scientific observations. Existing mitigation mechanisms rely on reactive beam steering, or avoidance near observatories but fail to account for aggregate sidelobe emissions—leading to residual interference and substantial, unnecessary capacity loss. We present \sysabb, a constellation-wide sky-sharing system that enables predictive, interference-aware spot beam scheduling to protect radio astronomy while preserving network coverage. \sysabb integrates high-fidelity orbital prediction with International Telecommunication Union (ITU)-compliant Equivalent Power Flux Density (EPFD) modeling, and real-time observatory data via Operational Data Sharing (ODS) to jointly optimize beam--cell assignments over observation windows. To make constellation-scale coordination tractable, we introduce a concept of EPFD-budgeted Region-of-Interest (RoI) that bounds residual sidelobe interference while confining optimization to a minimal, provably sufficient set of cells. Building on RoI, we formulate LEO--RAS coexistence as a scalable scheduling problem and design \algname, a flow-based algorithm that is optimal in special cases and yields scalable near-optimal solutions in the general NP-hard setting. \sysabb operates entirely in the control plane and requires no satellite hardware changes. Using real Starlink constellation geometries, we evaluate \sysabb across \NSites{} single-dish, Ku-band RAS sites worldwide. Compared to Starlink boresight avoidance, \sysabb reduces unserved cells by up to \PImprovement{}\% while remaining within EPFD limits.
\end{abstract}

\keywords{LEO constellations, Spectrum Sharing, and Radio Astronomy}

\maketitle

\vspace{-0.1in}
\section{Introduction}\label{sec:intro}

Large-scale deployment of LEO communication satellites has become a critical component of global internet connectivity \cite{zhang2022leo}, with commercial systems such as Starlink, Kuiper, and OneWeb now operating thousands of satellites and providing broadband access to rural, remote, and underserved regions~\cite{leoamazonleo_updates_2026,leoey_leo_2025,leooneweb_constellation_2026,leostarlink2025progress}. However, the ubiquitous availability of LEO satellites poses a threat to Radio Astronomy Services (RAS), i.e., while traditionally, the geographical distance of RAS sites from urban areas protected against terrestrial radio frequency interference (RFI), now every location on the earth is susceptible to RFI from LEO constellations \cite{nhan2025ods}.

RAS (or Radio Telescopes) are designed to detect extremely weak cosmic signals and therefore require stringent protection from interference sources \cite{ITU-RA769-2}.  As LEO constellations continue to scale, ensuring coexistence with radio astronomy without compromising global connectivity (or scientific discovery) has become an urgent operational challenge.

\vspace{-0.1in}
\subsection{Current Practice and Its Limitations}\label{sec:current_practice_limitations}

To protect RAS, satellite operators, including SpaceX \cite{Starlink}, have begun collaborating with radio astronomy organizations through the Operational Data Sharing (ODS) framework \cite{NRAOODS2025}, which provides real-time information about RAS locations, observation schedules, pointing directions, and frequency bands. Leveraging this information, existing coexistence mechanisms primarily rely on \emph{reactive, threshold-based beam avoidance} in conjunction with exclusion zones. A representative example is Starlink Telescope Boresight Avoidance (TBA) approach~\cite{Starlink}, in which satellites disable or steer spot beams away when their angular separation from a RAS boresight falls below respective predefined thresholds \cite{Starlink,10539143}.
While these methods are effective at preventing the most severe interference events, they suffer from two limitations: 

First, they fail to account for \emph{aggregate sidelobe interference} (See \S\ref{sec:interference}). Sidelobe emissions (Figure~\ref{fig:interference-level}) from many satellites can still accumulate and increase the RFI level, especially for smaller RAS antennas (or at lower frequencies). As a result, existing reactive, threshold-based beam avoidance mechanisms fundamentally limit the achievable RAS protection level as the scale of users and satellites increases, compromising the quality of RAS observations.

Second, they can create \emph{persistent coverage gaps} due to reactively disabling beams or redirecting them. 
To maintain the required protection level in worst-case scenarios, satellite operators have to disable service to certain regions near RAS sites, which limits service over large geographic areas, disproportionately affecting rural, remote users who rely on satellite Internet access. With over a hundred active radio astronomy sites worldwide~\cite{goastronomyRadioTelescopes}, exclusion-zone-like protection affects a growing number of users as constellations and usage grow. Figure~\ref{fig:tba_char} shows this \textit{limitation}:
With RFI quantified by Equivalent Power Flux Density (EPFD) metric (see \S\ref{sub:interference-model}), configuring the parameters of the TBA 
(detailed in \S\ref{sub:tba})
cannot arbitrarily reduce EPFD without sacrificing coverage. As a result, protecting observations often requires disabling a significant set of surrounding cells. And even then, the remaining EPFD is dominated by the aggregate of many satellites’ sidelobes (see \S\ref{sub:sidelobes}), so further exclusion of cells from receiving service only results in diminishing EPFD reduction (Figure~\ref{fig:tba_char}), with the residual interference limiting achievable RAS sensitivity (see \S\ref{subsec:impact_ras})

\begin{figure}[t]
    \centering
    \begin{subfigure}[t]{0.49\linewidth}
        \includegraphics[width=\linewidth]{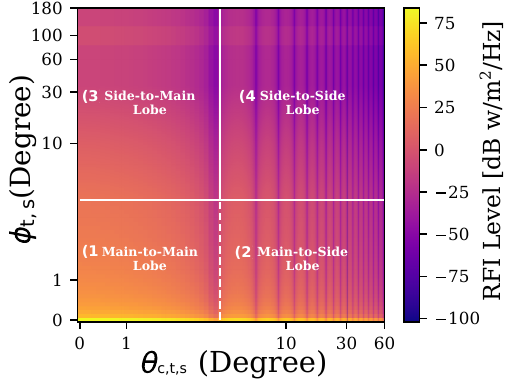}
        \caption{RFI levels under varying angle between satellite $s$, RAS ($\phi_{t,s}$), and ground cell $c$, s, RAS ($\theta_{c,t,s}$)  at a time $t$.}
        \label{fig:interference-level}
        \vspace{-0.1in}
    \end{subfigure}
    \hfill
    \begin{subfigure}[t]{0.47\linewidth}
        \includegraphics[width=\linewidth]{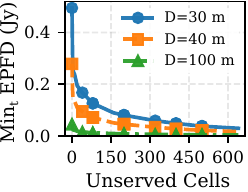}
        \caption{Spectral EPFD vs. unserved cell under TBA sweeps for three RAS diameters \textbf{$(1 \;\mathrm{Jy}: 10^{-26}\mathrm{W\,m^{-2}\,Hz^{-1}}).$}}
        \vspace{-0.1in}
        \label{fig:tba_char}
    \end{subfigure}
    \caption{RAS interference dynamics and TBA characterization.}
    \label{fig:tba_combined}
    \vspace{-0.2in}
\end{figure}

\textbf{Key Insight.} These limitations stem from a common root cause: radio astronomy protection is treated as a \emph{local, reactive, rule-based} problem rather than a \emph{constellation-wide} one. Existing approaches \cite{Starlink,nhan2025ods, nhan2024spectrum} reason about satellites independently and act only after a satellite enters a restricted region, ignoring aggregate interference and failing to exploit spatial and temporal diversity.
As shown in \S\ref{sec:interference}, multiple satellites can typically serve the same ground cell at any moment, yet their interference at a RAS varies widely due to geometry, antenna patterns, sidelobes, and relative motion. This diversity allows reallocating service across satellites to preserve coverage while reducing aggregate RFI. Our key insight is that effective LEO–RAS coexistence requires \emph{predictive, constellation-wide coordination}. Because satellite orbits are highly predictable and radio astronomy observations are scheduled in advance, satellite–cell assignments can be planned proactively, before harmful interference arises. As illustrated in Figure~\ref{fig:intro}, a coordinated scheduler maintains connectivity by reassigning spot beams at time $t$. Jointly accounting for future satellite positions, observation schedules, and aggregate RFI transforms RAS protection from reactive beam shutdowns into a principled scheduling problem.

\begin{figure}
    \centering
    \includegraphics[width=0.9\linewidth]{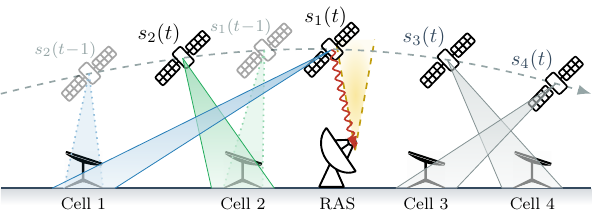}
    \vspace{-0.1in}
    \caption{
    Satellites
    adjust their associations with ground cells over time. At time~$t$, $s_1$ lies within the RAS receiving beam, and triggers handover for all shown cells to protect RAS.}
    \label{fig:intro}
    \vspace{-0.1in}
\end{figure}

\subsection{Design and Contributions}
In this paper, we present \sysabb, a predictive control-plane system for constellation-wide LEO--RAS spectrum sharing. \sysabb combines satellite ephemerides, real-time RAS metadata from ODS, and cell demand to proactively schedule satellite spot beams under time-averaged EPFD constraints. It jointly assigns satellite--cell associations to minimize coverage loss while satisfying interference limits. We formulate this problem as a mixed-integer optimization and show it is NP-hard. To enable scalable operation, we design \algname, a flow-based scheduler that yields optimal solutions in special cases and efficient near-optimal solutions in general settings.

In summary, this paper makes the following contributions:

\noindent $\bullet$ We present \sysabb, the first control-plane system for predictive, constellation-wide spot-beam scheduling that protects RAS while preserving LEO network coverage.

\noindent $\bullet$ We develop an ITU-compliant EPFD interference model capturing main-lobe and aggregate sidelobe effects, and introduce an EPFD-budgeted Region-of-Interest (RoI) abstraction that bounds sidelobe aggregation while restricting optimization to a minimal, provably sufficient set of cells.

\noindent $\bullet$ Using EPFD-budgeted RoI, we formulate constellation-wide LEO--RAS scheduling under spot-beam limits, heterogeneous cell demand, and EPFD constraints, show the problem is NP-hard, and design \algname, a scalable scheduler with min-cost-flow optimality in special cases, and near-optimal solutions (via feasibility repair) in the general setting.

\noindent $\bullet$  Using real Starlink Gen2-mini constellation  (obtained from CelesTrak)
and $25$ single-dish, Ku-band RAS sites worldwide, we show that \sysabb reduces unserved cells by up to \PImprovement{}\% compared to state-of-the-art baselines while maintaining EPFD compliance.

\section{Background and Preliminaries}
\label{sec:background}
This section provides background on LEO satellite spot beam operation, RAS observations, protection criteria, and the state-of-the-art TBA approach employed by SpaceX Starlink.

\vspace{-0.1in}
\subsection{LEO Satellites and Spot Beam Operation}

Modern LEO satellite constellations, such as Starlink and Kuiper~\cite{kim2026satellite, FCC2148, Kuiper2019}, employ electronically steerable phased-array antennas capable of forming and steering multiple narrow spot beams simultaneously. These antennas enable satellites to dynamically allocate downlink capacity across many ground locations without mechanical motion. We model Starlink antenna as a 2D rectangular phased array (\textbf{Appendix~\ref{app:star-antenna}}).

Let $S=\{s_i\}_{i=0}^{|S|-1}$ denote the set of LEO satellites in the constellation. Each satellite can form up to $N^{sb}$ independent spot beams at any time, limiting the number of ground locations it can serve simultaneously.

We adopt a \emph{quasi-Earth-fixed} cell model \cite{wang2024systematic, fu2023satellite, lin2022path}, in which each spot beam illuminates a fixed ground footprint, referred to as a \emph{cell}, for a short duration when the satellite passes overhead by continuously steering spot beams, as adapted by Starlink \cite{wang2024systematic}. We denote the set of all cells worldwide as $C^{all}=\{c_i\}_{i=0}^{|C^{all}|-1}$. We assume each cell is served by at most one spot beam at a time; multi-beam service is discussed in \S\ref{sec:discussion}.

LEO satellites adjust transmit power dynamically to satisfy regulatory Power Flux Density (PFD) limits on the ground, which are set to protect terrestrial services\footnote{ITU limits the maximum power flux density (PFD) on the ground for LEO ku-band Satellite operation \cite{ITU-R_SF.1482,ITUMaxPower}.}. Consequently, when a satellite is visible to a cell, the achievable downlink capacity is largely determined by allocated bandwidth and receiver characteristics once minimum visibility constraints are satisfied. In this work, we therefore model spot beam capacity as fixed whenever a satellite is visible to a cell (\textbf{more details in Appendix \ref{app:spot beam-cap}}).

In practice, LEO satellites periodically reconfigure spot beams over a scheduling window $w$ (e.g., Starlink uses 15\,s intervals~\cite{tanveer2023making}). We denote the \textbf{window size} by $N_w$, during which satellite--cell assignments remain fixed.

\begin{figure}[t]
    \centering
    \begin{subfigure}[t]{0.49\linewidth}
    \includegraphics[width=\linewidth]{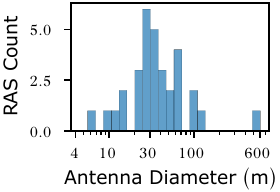}
        \caption{Histogram of RAS antenna diameter.}
        \label{fig:antenna_hist}
    \end{subfigure}
    \hfill
    \begin{subfigure}[t]{0.49\linewidth}    \includegraphics[width=\linewidth]{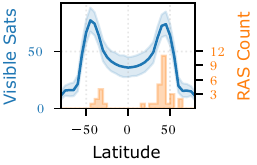}
        \caption{Histogram of RAS count, visible satellites vs latitudes.}
        \label{fig:ras_latmpact}
    \vspace{-0.25in}
    \end{subfigure}
    \caption{Characterization of RAS Sites Worldwide}
    \vspace{-0.2in}
\end{figure}

\subsection{RAS Telescope Operation} \label{sec:ras_telescope_operation}

More than one hundred RAS telescopes are deployed worldwide~\cite{goastronomyRadioTelescopes,wiki_radio_telescopes}. Figure~\ref{fig:antenna_hist} shows that these sites exhibit diversity in antenna size, ranging from a few meters to over $100$ meters. Therefore, RAS receiver characteristics vary widely~\cite{sokolowski2022ska}. Figure~\ref{fig:ras_latmpact} illustrates the geographic distribution of RAS sites and the corresponding variation in satellite visibility across latitudes. 
RAS observations are typically scheduled in advance, and their operational metadata can be shared with LEO satellite operators through Operational Data Sharing (ODS) as discussed below.

\vspace{-0.1in}
\paragraph{\textbf{2.2.1. Operational Data Sharing (ODS) Framework}}
\label{sub:ods-and-timing}
ODS is an automated, self-reporting framework developed by National Radio Astronomy Observatory (NRAO) in which a radio observatory publishes near real-time telescope operational metadata
including telescope location, observation direction, schedules, and operation frequency bands 
into a secure database that participating satellite operators can access via a REST API, enabling near real-time adaptation of satellite downlink for spectrum coexistence \cite{nhan2025ods, NRAOODS2025}.

A radio observation (RO) can be represented as $RO = \{L_R, E_R, \Omega_O, D_O, F_R, \Delta F, T_{Start}, T_{End}\}$ where $L_R$ and $E_R$ denote RAS telescope location and elevation, $\Omega_O$ and $D_O$ are right ascension and declination, $F_R$ and $\Delta F$ denote operating frequency and bandwidth, and $T_{Start}$ and $T_{End}$ define the scheduled observation duration.

The \textbf{observation duration} is $\tau = T_{End}-T_{Start}$. For example, Green Bank Telescope spectral observations commonly use integration times of 1\,s or longer~\cite{nhan2024spectrum}, while ITU-R recommendations specify protection criteria over intervals extending to thousands of seconds~\cite{ITU-RA769-2}.

Given the fast-moving LEO satellite network, let us discretize $\tau$ into multiple time slots, with the total number of slots given by $|\mathcal{T}| = \tau/\Delta t$, where $\Delta t$ denotes the slot length. A small slot length $\Delta t$ (typically 1 s, as in \cite{di2023unintended}) is used to capture the dynamics of LEO satellite–RAS interference.

\vspace{-0.1in}
\paragraph{\textbf{2.2.2. Time-Averaged EPFD}} Each time slot $\Delta t$ captures instantaneous interference from moving satellites. Aggregate interference over the full observation duration $\mathcal{T}$ is evaluated using the time-averaged equivalent power flux density (EPFD), following ITU-R S.1586-1~\cite{ITUR_S.1586-1}. Satellite networks must ensure that this time-averaged EPFD remains below the harmful threshold throughout $\tau$. We provide the complete time-averaged EPFD modeling in \S \ref{sub:interference-model}.

\vspace{-0.1in}
\paragraph{\textbf{2.2.3. Relationship between window size $N_w$, and observation duration $\tau$}} \sysabb is designed to handle LEO and RAS coexistence over the observation duration $\tau$ (e.g., 2000 s, as in~\cite{NRAOODS2025}), whereas the LEO operator automatically reconfigures the satellite network over a scheduling window size $N_w$ (e.g., 15 s for Starlink). Thus, the set of all scheduling windows is defined as $W = \{w_i\}_{i=0}^{|W|-1}$ where $|W| = \tau / N_w$.

\vspace{-0.1in}
\subsection{Telescope Boresight Avoidance (TBA)} \label{sub:tba} 

Unfortunately, today’s LEO satellite operators, such as Starlink, which use TBA methods~\cite{nhan2024spectrum, NRAOODS2025},
do not explicitly take the EPFD constraint into account to protect RAS sites.
TBA relies on reactive, local, rule-based actions without validating the time-averaged interference received at RAS sites and does not ensure that the interference level remains below the permissible threshold. TBA can be outlined as follows:

(1) \textit{Downlink beams are disabled when a satellite passes close to a telescope’s boresight direction}, i.e., the angular separation between the satellite and the telescope boresight positions is less than the ``\textbf{\textit{inner boresight}}'' value (e.g., $\leq 0.5^\circ$ for Green Bank Observatory \cite{nhan2024spectrum}); and 

(2) \textit{Beams of the LEO satellite are steered away from the RAS telescope} (e.g., $180$ km \cite{nhan2025ods}) for satellites within a predefined ``\textbf{\textit{outer boresight}}'' range.

While effective at reducing interference levels, satellites outside the inner and outer boresight continue with normal beam illumination and can collectively generate harmful aggregate sidelobe interference (outlined in \S\ref{sec:interference}), fundamentally limiting the protection level TBA can provide.

\section{\sysabb System}
\label{sec:system}

\sysabb\ addresses these shortcomings by constellation-wide scheduling to protect RAS sites from harmful LEO interference while minimizing user service disruption.

\noindent \textbf{Design Goals:} \sysabb is guided by the following goals:

\begin{figure}[t]
    \centering
    \includegraphics[width=\linewidth]{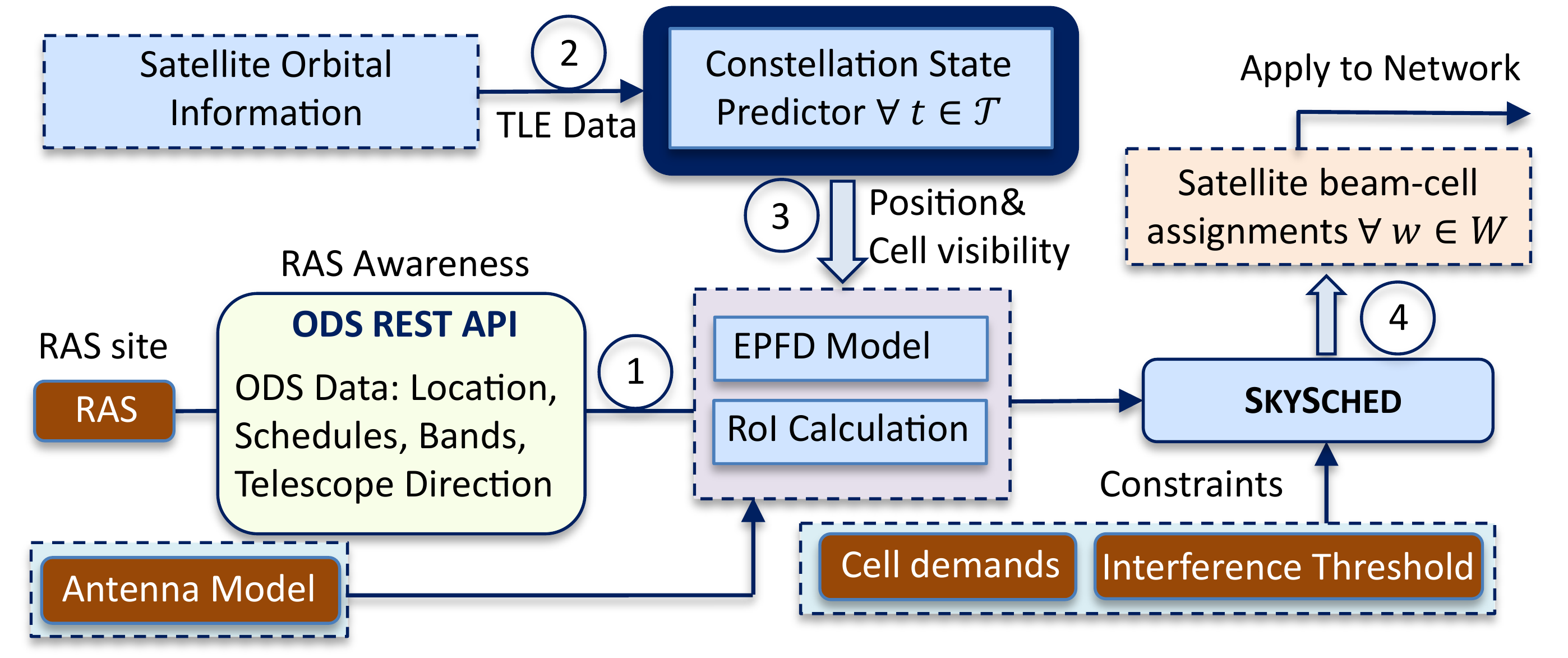}
    \vspace{-0.25in}
    \caption{Overview of \sysabb system.}
    \label{fig:queit_system}
    \vspace{-0.2in}
\end{figure}

\noindent \textbf{1) Strict EPFD compliance.} The primary objective is to ensure that time-averaged EPFD at RAS telescope remains below harmful interference levels.

\noindent \textbf{2) Minimal coverage loss.} \sysabb seeks to preserve coverage near RAS sites while protecting RAS sites, as unnecessary service disruption should be avoided. 

\noindent \textbf{3) Predictive operation.} Because satellite motion and observation schedules are highly predictable, scheduling decisions should proactively exploit future states knowledge.

\noindent \textbf{4) Practical deployability.} The system must operate within existing LEO control-plane infrastructures and not require modifications to satellite hardware or physical-layer beamforming mechanisms.

\vspace{-0.1in}
\subsection{System Architecture}

The \sysabb\ system sits within LEO satellite operator’s cloud control infrastructure and operates over scheduling windows $w$, each of size $N_w$ (e.g., $15$ seconds in case of Starlink Gen2-mini). \sysabb operates as follows (Figure~\ref{fig:queit_system}):

\bstep{1} RAS observatories share their observation metadata, including, telescope location, pointing direction, operating frequency, and observation schedule, through the ODS framework. This information specifies when and where radio astronomy protection is required.

\bstep{2} The \textit{Constellation State Predictor} 
uses the operator’s internal LEO orbital parameters and propagates them to compute precise satellite positions for each satellite over the scheduling interval.

\bstep{3} 
Using predicted satellite states and RAS observation schedules, \sysabb determines satellite visibility for each ground cell, computes relative geometry w.r.t. each RAS telescope, estimates EPFD interference contributions for the entire window size $N_w$, and identifies a Region of Interest (RoI) (detailed in \S\ref{secroi}) around each observatory where satellite beam scheduling decisions materially affect interference.

\bstep{4} At the core of \sysabb is \algname, a predictive scheduler that jointly considers satellite visibility, RAS observation metadata, and interference EPFD thresholds to compute constellation-wide, beam--cell assignments over the scheduling interval. \algname enforces time-averaged EPFD constraints while accounting for cell-level traffic demand, enabling interference-safe coexistence with minimal coverage loss. For EPFD protection, \algname requires only an upper bound on each cell’s demand over the scheduling interval. Such bounds can be estimated from registered users per cell or predicted from recent traffic; using upper bounds is conservative and preserves protection guarantees.

We briefly discuss the practical deployment of \sysabb within the operational constraints of today's LEO satellite networks in \textbf{Appendix~\ref{sec:practicality}}.

\section{Interference Modeling and Protection Constraints}
\label{sec:interference}

This section details how LEO satellite interference impacts RAS, how aggregate interference is modeled using EPFD, and how protection constraints are enforced. We then introduce the EPFD-budgeted Region of Interest (RoI) abstraction, which makes constellation-scale \sysabb optimization computationally tractable. Notation and acronym tables are provided in \textbf{Appendix~\ref{app:notation_table}} and \textbf{Appendix~\ref{app:acronyms}}.

\begin{figure}[t]
    \centering
    \includegraphics[width=1\linewidth]{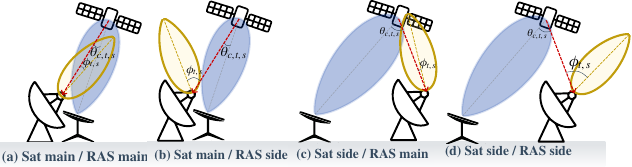}
    \caption{LEO--RAS interference scenario. Blue, yellow, and red beams denote the satellite downlink, the RAS beam, and the interference path from satellite $s$ serving cell $c$ at time $t$, respectively.}
    \label{fig:ra-leo-sharing}
    \vspace{-0.15in}
\end{figure}

\subsection{LEO-RAS Interference Scenarios}
\label{sub:sidelobes}
Although satellite antennas are designed to concentrate energy in narrow main lobes directed toward ground cells, unavoidable sidelobes radiate energy in other directions. Similarly, RAS telescope antennas exhibit strong main lobes toward the observation direction and sidelobes that can receive off-axis signals. Interference arises from interactions between these radiation patterns and constellation geometry.
\sysabb targets in-band coexistence in the 10.7--12.7~GHz downlink range, where RAS observations and satellite downlink transmissions overlap in frequency. RAS may observe in this shared spectrum for continuum and redshifted spectral-line science~\cite{vanZeeAstro2020}; therefore, \sysabb models intended co-frequency downlink emissions, not out-of-band leakage into adjacent protected bands.
As depicted in Figure~\ref{fig:ra-leo-sharing}, four interference scenarios occur:

$1)$ \textbf{\textit{Satellite Main-Lobe to RAS Main-Lobe.}}
A satellite main-lobe $G_{T\text{x},Main}(\theta_{c,t,s})$ aligns with the RAS main-lobe $G_{R\text{x},Main}(\phi_{t,s})$. This produces the strongest interference but is rare, since it requires an alignment of the satellite with the RAS boresight in addition to the satellite serving a cell near the RAS. Figure \ref{fig:ra-leo-sharing}(a) depicts this case.

 $2)$ \textbf{\textit{Satellite Mainlobe to RAS Sidelobe.}}
A satellite serves a nearby cell using its main lobe, but the RAS receives the signal through its sidelobe $G_{R\text{x},Side}(\phi_{t,s})$. This scenario is more common and contributes to moderate interference.

 $3)$ \textbf{\textit{Satellite Sidelobe to RAS Mainlobe.}}
Sidelobe radiation $G_{T\text{x},Side}(\theta_{c,t,s})$ from satellites enters the RAS main lobe. Although individual sidelobe emissions are weak, the RAS’s high main-lobe gain makes this contribution significant.

 $4)$ \textbf{\textit{Satellite Sidelobe to RAS Sidelobe.}}
Both antennas interact through sidelobes. While each contribution is small, 
aggregate interference from many satellites may still contribute non-negligible interference levels. 

As constellation density increases, aggregate interference from sidelobe interactions becomes more important, making RAS protection a fundamentally constellation-wide phenomenon rather than a local geometric event. The existing TBA approach does not properly account for this source of interference and thus fails to protect RAS sites with increased users and satellite density (as shown in \S\ref{sec:evaluation}).

\subsection{EPFD-Based Aggregate Interference}
\label{sub:interference-model}
We follow ITU-R recommendations~\cite{ITUR_S.1586-1} and model interference using EPFD, which is defined as the equivalent single-source power flux density that would produce the same received power in the RAS telescope main beam as the aggregate interference from multiple satellites. 

The instantaneous power flux density (PFD) at the RAS from satellite $s$ transmitting to cell $c$ at time $t$ is given by $\text{PFD}(\theta_{c,t,s}) = \frac{\text{EIRP}(\theta_{c,t,s})}{4\pi D_{t,s}^2}$,
where $D_{t,s}$ is the distance between satellite $s$  at time $t$ and the RAS telescope.  $\text{EIRP}(\theta_{c,t,s})$ is the effective isotropic radiated power toward the RAS at off-axis angle $\theta_{c,t,s}$, we have $\text{EIRP}(\theta_{c,t,s}) = P_{T\text{x},c,t,s} G_{T\text{x}}(\theta_{c,t,s})$, with $P_{T\text{x},c,t,s}$ denoting the transmitted power from satellite $s$ towards cell $c$ at time interval $t$.
Let $\text{e}_{c,t,s}$ denote the EPFD contribution at the RAS telescope from a single spot beam of satellite $s$ serving cell $c$ at any time interval $t \in \T$, we have: $\text{e}_{c,t,s} =
P_{T\text{x},c,t,s} \frac{G_{T\text{x}}(\theta_{c,t,s})}{4\pi D_{t,s}^2} \frac{G_{R\text{x}}(\phi_{t,s})}{G_{R\text{x},max}},$
where $G_{T\text{x}}(\theta_{c,t,s})$ and $G_{R\text{x}}(\phi_{t,s})$ represent the antenna gain of the satellite transmitter and the radio telescope receiver at the given off-axis angles $\theta_{c,t,s}$ and $\phi_{t,s}$, and $G_{R\text{x}, max}$ is the maximum RAS gain. Note that $D_{t,s}$, $\theta_{c,t,s}$, and $\phi_{t,s}$ are received from Constellation State Predictor (See \S\ref{sec:system}). The instantaneous EPFD [W/m$^2$] at time interval $t$ is the 
incoherent linear sum of received flux-density contributions from all active spot beams, following ITU-R S.1586-1~\cite{ITUR_S.1586-1}:

\begin{equation}
\label{eq:1-epfd}
\text{EPFD}(t) = \sum_{s \in S} \sum_{c \in C^{all}} x_{c,t,s} \, \text{e}_{c,t,s},
\end{equation}
where $x_{c,t,s} \in [0,1]$ indicates the normalized capacity of the spot beam from satellite $s$ dedicated to serving cell $c$ at time interval $t$, since we allow time-sharing of a spot beam in any time interval between more than 1 cells.

Figure~\ref{fig:interference-level} shows the interference heatmap as a function of satellite and RAS off-axis angles, illustrating the four interference regimes and highlighting the importance of sidelobe contributions.

\vspace{-0.1in}
\subsection{Time-Averaged Protection Constraints}
\label{sub:epfd-constraint} 
ITU-R RA.1513-2, ITU-R S.1586-1, and ITU-R RA.769-2~\cite{ITU-R_RA.1513-2, ITUR_S.1586-1, ITU-RA769-2} establish that RAS protection is achieved by requiring the time-averaged EPFD over the observation duration to be below the corresponding detrimental threshold, which we denote by $TH_{RAS}$ [W/m$^2$/Hz].
\begin{equation}
\label{eq:epfd_average}
\sum_{t}^{|\mathcal{T}|-1} \text{EPFD}(t) \Delta t \le \tau \, \Delta F \, TH_{RAS},
\end{equation}
where $\tau$ is the total observation duration, $\Delta t$ is the discretization interval, and $\Delta F$ is the observation bandwidth.

The detrimental threshold $TH_{RAS}$ in \eqref{eq:epfd_average} is selected to preserve a specified observation sensitivity and is defined as the interference level that increases the measurement uncertainty by a fixed fraction relative to the receiver-noise: Thermal noise averages down with integration time ($\propto 1/\sqrt{\Delta F\,\tau}$) \cite{ITU-RA769-2}, whereas general interference contributes additional received power that might not vanish under averaging. Thus, if the time-averaged EPFD exceeds $TH_{RAS}$, the observation cannot achieve the intended sensitivity without explicit interference mitigation, and we interpret such violations as RAS data loss. The selection of $TH_{RAS}$ therefore depends on observation type (continuum vs.\ spectral line), bandwidth, the required detection performance (e.g., target SNR), and source visibility, and could be provided by RAS sites or derived from ITU-R specifications \footnote{In co-frequency Ku-band LEO-RAS coexistence, the RA.769-2 detrimental thresholds can be extremely stringent, and often only less sensitivity levels could be achieved.}\cite{ITU-RA769-2}. 
This time-averaged constraint couples scheduling decisions across time and satellites, making radio astronomy protection inherently a multi-satellite, multi-interval optimization problem.

\begin{figure}[t]
    \centering
    \includegraphics[width=0.95\linewidth]{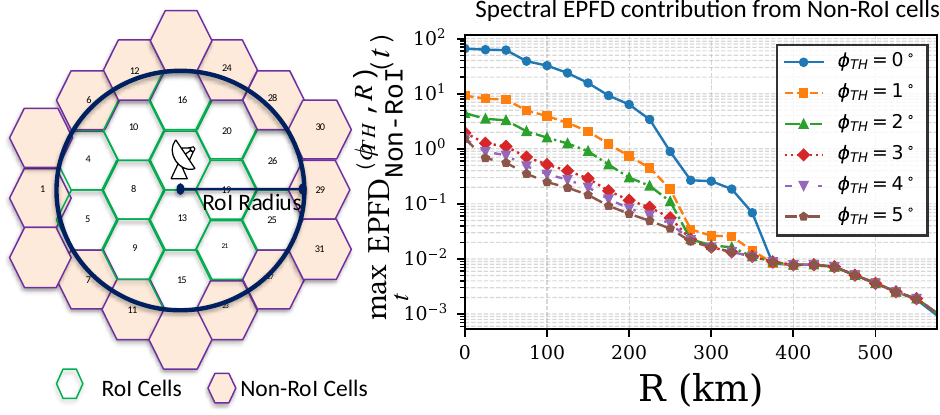}
    \caption{Aggregate spectral EPFD from serving non-RoI cells vs. $R$ and $\phi_{TH}$ when satellites within \textit{${\phi}_{TH}$} of boresight are excluded.}
    \label{fig:roi}
    \vspace{-0.1 in}
\end{figure}

\subsection{EPFD-Budgeted Region of Interest (RoI)}
\label{secroi}

A central challenge in \sysabb is \textit{\textbf{scale}}: In principle, EPFD interference aware scheduling would require jointly optimizing beam assignments for \emph{all} ground cells visible to any satellite whose transmissions may contribute interference at an RAS site. At the constellation scale, such global optimization is computationally infeasible and operationally undesirable, as localized RAS activity could trigger cascading scheduling changes across large portions of the network. Conversely, restricting optimization to a purely geographic neighborhood of the RAS is unsafe. Although distant cells individually contribute little interference, their aggregate sidelobe emissions can still violate EPFD limits~\footnote{This is a key limitation of currently deployed reactive, threshold-based beam avoidance algorithms, such as, Starlink's TBA (\S \ref{sec:intro}).}. 

\sysabb resolves this tension by introducing an \emph{EPFD-Budgeted Region of Interest (RoI)}. Rather than optimizing the entire constellation, \sysabb allocates a residual EPFD budget $\Delta TH_{RAS}$ to non-RoI cells, i.e., cells outside the RoI, and requires that, under the operator’s default scheduling policy, their aggregate interference does not exceed this budget. Only cells whose assignments may jointly exceed $\Delta TH_{RAS}$ are included in \sysabb{}’s interference-aware scheduling optimization. This creates a principled decomposition of the potential serving cells into two regions: (i) a compact set of cells near the RAS handled by \sysabb, and (ii) the remaining cells, which continue operating under standard policies. \sysabb then enforces the effective constraint $\overline{TH}_{RAS} = TH_{RAS} - \Delta TH_{RAS}$, ensuring that optimized cells respect the reduced EPFD threshold while preserving headroom for residual interference. This abstraction bounds worst-case EPFD without requiring joint scheduling of all ground cells,
making \sysabb both safe and scalable across all RAS sites and LEO constellations globally. Importantly, it allows network operators to maintain standard scheduling policies outside the bounded region of interest (RoI), preventing cascading scheduling effects. At the same time, it confines \sysabb\hspace{-0.3em}’s interference-aware scheduling pipeline, \algname (detailed in \S\ref{sec:algorithm}), to the minimal scope necessary for efficient LEO--RAS coexistence.

\textit{\textbf{Determining the RoI and residual EPFD budget.}}
The residual budget $\Delta TH_{RAS}$ directly controls RoI size: smaller values strengthen interference guarantees but increase optimization scope, while larger values reduce computational cost at the risk of higher sidelobe aggregation.

\sysabb Rather than treating $\Delta TH_{RAS}$ as a free parameter, derives it via an online procedure for each RAS site and constellation configuration at each scheduling window.

This procedure exploits two dominant geometric factors affecting EPFD: (i) \textit{the geographic distance $R$} of served cells from RAS site, and (ii) \textit{the satellite--RAS boresight angular separation $\phi_{t,s}$}.
Hence, \sysabb parametrizes the RoI using a geographic radius $R$ (which divides cells into RoI and non-RoI based on the distance of the cell from RAS), and an angular threshold $\phi_{TH}$; to identify satellites that are too close to the RAS boresight to be safely handled by the default, non-RoI scheduling policy (See Eq.~(\ref{eq:cell-roi-def})).

For each candidate pair $(\phi_{TH}, R)$, \sysabb executes its \algname scheduling algorithm over the observation duration $\tau$
and records the resulting EPFD contributions from RoI and non-RoI cells. For each configuration, we compute the worst-case available slack relative to the regulatory limit for $\T$, the set of time intervals that span $\tau$:
\begin{equation}
\text{slack} = TH_{RAS} - \frac{1}{\Delta F}\max_{t \in \T} \Bigl(\text{EPFD}_{\text{RoI}}^{(\phi_{TH}, R)}(t) + \text{EPFD}_{\text{Non-RoI}}^{(\phi_{TH}, R)}(t)\Bigr),
\end{equation}

where $\Delta F$ is the observation bandwidth (See \S\ref{sub:ods-and-timing}). Configurations with positive slack satisfy EPFD constraints while preserving full service feasibility. For such configurations, the implied residual budget is:
\vspace{-0.05in}
\begin{equation}
\Delta TH_{RAS} = \frac{1}{\Delta F}\max_t \text{EPFD}_{\text{Non-RoI}}^{(\phi_{TH},R)}(t).
\end{equation}

\sysabb selects the smallest $(\phi_{TH}, R)$ yielding positive slack, thereby identifying the minimum RoI whose complement provably fits within $\Delta TH_{RAS}$.
The feasible range of $\phi_{TH}$ is determined using the ITU reference RAS antenna model \cite{ITUR-RA.1631-0}, i.e., we sweep $\phi_{TH}$ from $0^\circ$ up to $\phi_{TH}^{\max}=13.6^\circ$, beyond which the gain of the reference RAS antenna drops below $0$~dB.
Range of $R$ is from $0$ to $R_{\max}$, where $R_{\max}$ is the maximum distance from the RAS to any ground cell that can be served by a satellite that remains visible to the RAS (e.g., with a $10^\circ$ RAS elevation, $R_{\max}\approx 3000$~km for a satellite at $600$~km altitude with a $60^\circ$ maximum steering limit \cite{FCC2148}). The RoI calculation algorithm is provided in \textbf{Appendix~\ref{app:roi}.}
Figure~\ref{fig:roi}~(right) illustrates an example of RoI calculation procedure, and plots $\max_t \text{EPFD}_{\text{Non-RoI}}^{(\phi_{TH},R)}(t)$ for values of $R\in (0, 550]$~km, and $\phi_{TH}$ between $0^\circ$ and $5^\circ$. In this example $TH_{RAS}$ was set to $0.4$~Jansky (Jy), and we have positive slack with $\phi_{TH}=5^\circ$, $\Delta TH_{RAS}=0.01$~Jy, and $R=360$~km.

\section{LEO--RAS Coexistence Problem Formulation}
\label{sec:formulation}

We formalize the LEO--RAS coexistence as a constellation-wide scheduling problem. Our goal is to maximize the number of ground cells whose traffic demand is served during a radio astronomy observation, subject to constraints on the time-averaged EPFD at the RAS telescope.

\noindent $\bullet$ \textit{Cells:} Based on the EPFD-budgeted RoI discussion in \S\ref{secroi}, the set of cells considered by \sysabb for satellite $s$ at time $t$, denoted by $C_s^t$, is defined as follows:
\begin{equation} \label{eq:cell-roi-def} 
    \begin{aligned}
        C_{\text{RoI}} &= \{ c \in C^{all} : \text{distance}(c,\text{RAS}) \le R \}, \\ C_s^t&=\{c\in C^{all}: c \text{ visible to } s \text{ at } t,(c\in C_{\text{RoI}} \lor \phi_{t,s}<\phi_{TH})\}\\
    \end{aligned} 
\end{equation}

\noindent $\bullet$ \textit{Cell Demand:}
Each cell $c$ has normalized traffic demand $x_d^c \in (0,1]$, representing the fraction of full spot beam capacity required on average over the observation duration. A value of $x_d^c = 1$ indicates that the cell requires continuous service by one full spot beam, while $x_d^c = 0$ indicates no demand and the cell is not considered in the problem. For instance, a user located in a cell within the RoI subscribes to Residential Lite (FIXED) service plans, for which Starlink provides download speeds of 80–200 Mbps (140 Mbps on average) \cite{service_plans}. With the achievable data rate of each spot beam approximated at 1200 Mbps (see \textbf{Appendix~\ref{app:spot beam-cap}}), the normalized capacity demand $x_d^c$ is $140/1200 = 0.1167$. This example illustrates one user's contribution to the cell load. In general, $x_d^c$ represents the aggregate normalized demand of all users in cell $c$.

\noindent $\bullet$ \textit{Decision Variables:}
We define the following variables: (1) $x_{c,t,s} \in \{0,x_d^c\}$ equals $x_d^c$ if satellite $s$ serves cell $c$ using the required fraction ($x_d^c$) of one spot beam during time interval $t$; and (2) $z_{c,w,s} \in \{0,1\}$ equals 1 if cell $c$ is assigned to satellite $s$ for all intervals in window $w$.

\noindent $\bullet$ \textit{Objective Function:}
We maximize the total number of cells (i.e., all cells within the RoI) whose demand is fully satisfied across the observation duration, i.e., $\sum_{t \in \T,s \in S, c \in \C_s^t}  x_{c,t,s}/x_d^c$.

\noindent $\bullet$ \textit{RAS Protection Constraint:} $e_{c,t,s}$ denotes the EPFD contribution at the RAS from satellite $s$ serving cell $c$ at time interval $t$ (\S\ref{sec:interference}). The time-averaged EPFD protection constraint is:
\begin{equation}
\sum_{t \in \mathcal{T}} \sum_{s \in S} \sum_{c \in C_s^t}
e_{c,t,s} \, x_{c,t,s} \, \Delta t
\le
\tau \, \Delta F \, \overline{TH}_{RAS},
\label{eq:epfd_constraint}
\end{equation}

\noindent $\bullet$ \textit{Satellite Spot Beam Constraints:} Each satellite can transmit at most $N^{sb}$ spot beams simultaneously; Hence its total normalized capacity is limited to $N^{sb}$:
\begin{equation}
\sum_{c \in C_s^t} x_{c,t,s} \le N^{sb}, \quad \forall s, t.
\label{eq:spot beam_limit}
\end{equation}

\noindent $\bullet$ \textit{Single-Satellite Service:} Each cell can be served by at most one satellite at any time:
\begin{equation}
\sum_{s \in S} x_{c,t,s} \le x_d^c, \quad \forall c, t.
\label{eq:single_sat}
\end{equation}

\noindent $\bullet$ \textit{Spot Beam--Cell Consistency:}
To avoid excessive handovers and ensure beam stability, each cell may be served by at most one satellite during each scheduling window:
\begin{align}
& x_{c,t,s} \le x_d^c z_{c,\lfloor t/N_w \rfloor,s}, \quad \forall c,t,s, \label{eq:link_xz} \\
& \sum_{s \in S} z_{c,w,s} \le 1, \quad \forall c,w. \label{eq:unique_z}
\end{align}

\noindent $\bullet$ \textit{Complete Optimization Problem:}
The constellation-wide scheduling problem is a mixed-integer linear program:
\begin{align}
\textbf{P0: } \max_{\{x,z\}} \quad & \sum_{t \in \T} \sum_{s \in S} \sum_{c \in \C_s^t}  x_{c,t,s}/x_d^c \nonumber\\
\text{s.t.} \quad
& \text{constraints}\  \text{(\ref{eq:epfd_constraint})--(\ref{eq:unique_z})} \nonumber \\
&z_{c,w,s}\in\{0,1\},x_{c,t,s}\in\{0,x_d^c\} \nonumber 
\end{align}

\begin{theorem}
\label{theorem:nphard}
Problem \textbf{P0} is NP-hard
(proof in Appendix~\ref{app:nphard}).
\end{theorem}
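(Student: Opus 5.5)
We prove NP-hardness by a polynomial-time reduction from the 0--1 Knapsack problem, which is NP-complete. Given a knapsack instance with items $i=1,\dots,n$, integer values $v_i$ and weights $w_i$, capacity $B$, and target value $V$, we build a degenerate instance of \textbf{P0}. The construction uses a single time slot, so $|\mathcal{T}|=1$, $N_w=1$, $\Delta t=\tau$. Each item $i$ becomes $v_i$ ``copies'' of a cell; in the basic version it becomes one cell $c_i$ with demand $x_d^{c_i}=1$.

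Since the objective counts $x_{c,t,s}/x_d^c$, each served cell contributes exactly $1$. To encode values, we first reduce from the unit-value case. Knapsack with unit values is polynomial, so we cannot use it. Instead we give item $i$ a group of $v_i$ cells that must be served together.

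Coupling is cleanest through the satellite side. Item $i$ corresponds to a dedicated satellite $s_i$ with $N^{sb}=v_i$ beams. This satellite is the only one visible to its $v_i$ cells $c_{i,1},\dots,c_{i,v_i}$, all with demand $1$. We set $e_{c_{i,j},t,s_i}=w_i/v_i$ and the budget $\tau\,\Delta F\,\overline{TH}_{RAS}=B\,\Delta t$.

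With this construction, however, partial service of a group is allowed. We therefore need a better coupling. The simplest choice is to reduce instead from \emph{Subset Sum / Knapsack with value equal to weight}, which is still NP-complete. Take a single cell per item with $v_i=w_i$ realized through demand: cell $c_i$ has demand $x_d^{c_i}=1$ and EPFD contribution $e_i=w_i$. The objective is then just the number of served cells, which is again unit-value.

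The real route is to exploit the heterogeneous demands in the spot-beam constraint~\eqref{eq:spot beam_limit} together with the EPFD budget, giving a two-dimensional packing. We use a single satellite $s$ visible to all cells $c_i$, with $x_d^{c_i}=a_i/A\in(0,1]$ and $N^{sb}=1$, so constraint~\eqref{eq:spot beam_limit} reads $\sum_{i\in I} a_i\le A$. We then choose $e_{c_i,t,s}$ so that the EPFD constraint~\eqref{eq:epfd_constraint} reads $\sum_{i\in I}(A-a_i)\le$ a budget $E$.

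Maximizing the number of served cells subject to both constraints encodes the \textbf{cardinality-constrained Subset Sum} problem. Set the target count to $k$, and set $E=kA-T$ and $A=T$ after shifting all $a_i$ by a large constant $M$. Sets of size $k$ then satisfy both constraints iff $\sum a_i=T$ exactly, which is the NP-complete $k$-Subset-Sum problem.

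The remaining steps are routine:
\begin{itemize}
\item Verify that every quantity is polynomially encodable and that $e_{c,t,s}$ is realizable as an arbitrary nonnegative input. We argue that the $e_{c,t,s}$ are treated as input data of \textbf{P0}, so geometry need not be realized.
\item Check that constraints~\eqref{eq:single_sat}--\eqref{eq:unique_z} are trivially satisfied, since there is one satellite and one window.
\item Show that a yes-instance corresponds to an optimum of at least $k$ served cells.
\end{itemize}

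The main obstacle is the careful choice of the shift $M$. It must guarantee that no set of size other than $k$ beats $k$. Sets larger than $k$ must violate the beam constraint and sets of size $k$ must be tight in both constraints, so we need $M$ larger than $\sum_i a_i$ and the demands rescaled to stay in $(0,1]$. A fallback that avoids the decision-threshold subtlety is to state hardness of the decision version: ``is the optimum $\ge k$?''
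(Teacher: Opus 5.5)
Your final construction is a correct reduction, but it differs genuinely from the paper's. The paper reduces from Bin Packing. It uses a single time slot, the $|S|$ satellites act as bins with $N^{sb}=\kappa$, each item becomes a cell with $x_d^c=a_c$ visible to every satellite, and $e_{c,t,s}=0$, so the EPFD constraint is vacuous; all cells can be served iff the items pack. You instead use a single satellite and make both constraints bind. With $x_d^{c_i}=a_i/A$ and $N^{sb}=1$, the beam constraint reads $\sum_{i\in I}a_i\le A$. With $e_{c_i,t,s}=A(A-a_i)/a_i$ and $\Delta F\,\overline{TH}_{RAS}=E$ (using $\Delta t=\tau$), the EPFD constraint reads $|I|A-\sum_{i\in I}a_i\le E$. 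So for $A=T$ and $E=(k-1)T$, a $k$-set is feasible iff it sums to exactly $T$.

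The paper's route is shorter and shows that packing heterogeneous demands across many satellites already causes hardness. Because Bin Packing is strongly NP-complete, it also gives strong NP-hardness. Yours shows that hardness persists with $|S|=1$ and $|\mathcal{T}|=1$, where either constraint alone is solved greedily. It thus pins the difficulty on the interaction between heterogeneous demand and the EPFD budget, i.e.\ a unit-profit two-dimensional knapsack. The price is only weak NP-hardness, since $k$-Subset-Sum is pseudo-polynomially solvable.

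For the write-up:
\begin{itemize}
\item Delete the abandoned Knapsack attempts; your opening sentence announces a reduction you never carry out.
\item Shift the target together with the items: $a_i\mapsto a_i+M$ and $T\mapsto T+kM$.
\item Your ``main obstacle'' is not one. All coefficients are nonnegative, so feasibility is downward closed, and any feasible set of size at least $k$ contains a feasible $k$-set, which must then sum to $T$. The shift is needed at most to keep every demand strictly positive.
\item Items with $a_i>A$ can be discarded, so that $x_d^{c_i}\in(0,1]$ and $e_{c_i,t,s}\ge 0$.
\end{itemize}
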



\section{\algname Algorithm}
\label{sec:algorithm}

This section presents \textsc{SkySched}, a centralized, polynomial-time, flow-based scheduling algorithm to solve the LEO--RAS coexistence problem formulated in \S\ref{sec:formulation}. The key idea is to convert satellite--cell assignment decisions into min-cost, flow selection over a time-expanded flow graph, where EPFD interference is modeled as edge cost and spot beam availability as capacity constraints. We first show that for a restricted special case, the formulation admits an exact solution via minimum-cost flow. We then extend this approach to the general NP-hard case using iterative feasibility repair.

\begin{figure}[t]
    \centering

    \begin{subfigure}[t]{0.5\linewidth}
        \includegraphics[width=\linewidth]{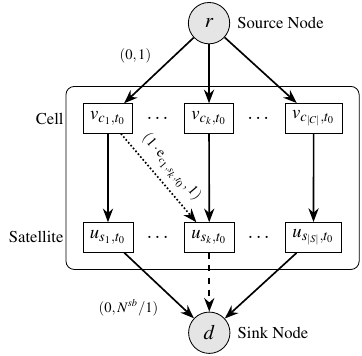}
        \caption{Flow network $G^f$ for a single time interval $t_0$}
        \label{fig:graph_left}
        \vspace{-0.06em}
    \end{subfigure}
    \begin{subfigure}[t]{0.44\linewidth}
        \includegraphics[width=\linewidth]{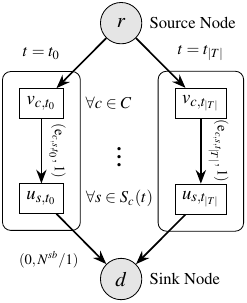}
        \caption{Time-expanded $G^f$ for scheduling interval.}
        \label{fig:graph_right}
        \vspace{-0.06em}
    \end{subfigure}
    \caption{Flow network used by \textsc{SkySched}. Each layer corresponds to a scheduling time interval (or window when $N_w>1$). Flow from source $r$ through $(c,t)$ and $(s,t)$ to sink $d$ represents assigning satellite $s$ to serve cell $c$ during interval $t$. Arc costs capture EPFD contribution; Arc capacities enforce spot beam limits.}
    \vspace{-0.2in}
    \label{fig:flow_graph}
\end{figure}

\subsection{Special Case: Unit Window and Uniform Demand}
\label{sec:specialcase}

We first consider the case where $N_w=1$, $x_d^c = 1$. The solution for this case corresponds to: i) letting the satellite-cell association change at every time interval; ii) every cell having the same demand, equal to the full capacity of one spot beam; and iii) the minimum service accepted at the cell is equal to the cell demand of 1, which implies we either can serve the cell with a full spot beam, or not serve it at all ($x_{c,t,s} \in \{0,1\}$).

Next, we show that in this special case, \textbf{P0} admits an exact reduction to a series of interference minimization subproblems, each solvable via a min-cost flow formulation.

\subsubsection{Interference Minimization Subproblem} \label{sub:in-min-sub} 

Let $\textbf{P}_1(q)$ be the problem of finding the minimum-interference assignment $x_{c,t,s}$ such that $\sum_{c,t,s} x_{c,t,s}/x_d^c = q$. Notice that since $N_w = 1$, constraints in (\ref{eq:link_xz})-(\ref{eq:unique_z}) are simplified to: $\sum_{s \in S} x_{c,t,s} \leq 1$, and $x_{c,t,s} \in \{0,1\}$. \textbf{Appendix~\ref{app:p1q}} gives the formal definition of the corresponding optimization problem \(\textbf{P}_1(q)\). Let us denote the optimal value of $\textbf{P}_1(q)$ as $\text{EPFD}^\star(q)$. In \text{\large L}\text{\small EMMA} \ref{lemma:monotone} and \text{\large T}\text{\small HEOREM} \ref{theorem:q-star-optimal}, we provide proof that the optimal solution of $\textbf{P}_0$ is obtained by solving $\textbf{P}_1(q)$ at largest integer $q$ that yields $\text{EPFD}^\star(q) < \Delta F\frac{\tau}{\Delta t} \overline{TH}_{RAS}$.

\begin{lemma}
\label{lemma:monotone}
$\text{EPFD}^\star(q)$ is non-decreasing in $q$.
\end{lemma}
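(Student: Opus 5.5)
The approach is a direct exchange (monotone-removal) argument: from an optimal solution of $\textbf{P}_1(q+1)$ we build a feasible solution of $\textbf{P}_1(q)$ whose interference is no larger. In the special case of \S\ref{sec:specialcase} ($N_w=1$, $x_d^c=1$), $\textbf{P}_1(q)$ is taken over binary assignments $x_{c,t,s}\in\{0,1\}$. These must satisfy the spot beam limit (\ref{eq:spot beam_limit}) and the simplified single-satellite constraint $\sum_{s} x_{c,t,s}\le 1$. They must also serve exactly $q$ cell--interval pairs, i.e.\ $\sum_{c,t,s}x_{c,t,s}=q$, and the objective is $\sum_{c,t,s} e_{c,t,s}x_{c,t,s}$. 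Proving $\text{EPFD}^\star(q)\le \text{EPFD}^\star(q+1)$ for every $q$ for which $\textbf{P}_1(q+1)$ is feasible then gives the claim by induction. Where $\textbf{P}_1(q+1)$ is infeasible we use the convention $\text{EPFD}^\star(q+1)=+\infty$, so the inequality holds trivially.

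\textbf{Step 1 (removal).} Let $x^{\star}$ be optimal for $\textbf{P}_1(q+1)$ with $q+1\ge 1$, so at least one triple $(c',t',s')$ has $x^\star_{c',t',s'}=1$. Define $\tilde x$ by setting that entry to $0$ and leaving all others unchanged. Both constraints are packing constraints with nonnegative coefficients, so lowering a variable keeps them satisfied. Hence $\tilde x$ is feasible for $\textbf{P}_1(q)$, since its total service is exactly $q$.

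\textbf{Step 2 (cost comparison).} Each contribution $e_{c,t,s}=P_{T\text{x},c,t,s}\,G_{T\text{x}}(\theta_{c,t,s})\,G_{R\text{x}}(\phi_{t,s})/(4\pi D_{t,s}^2 G_{R\text{x},max})$ is a product of nonnegative physical quantities, so $e_{c,t,s}\ge 0$. Therefore
\[
\text{EPFD}^\star(q)\le \sum e\,\tilde x=\text{EPFD}^\star(q+1)-e_{c',t',s'}\le \text{EPFD}^\star(q+1).
\]
Choosing the removed triple to maximize $e_{c',t',s'}$ gives a slightly sharper bound, but it is not needed.

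\textbf{Main obstacle.} The only delicate point is the definition of $\textbf{P}_1(q)$ as an equality-constrained problem: feasibility of level $q$ is not automatic, and it must not be broken by other constraints. If the appendix formulation in Appendix~\ref{app:p1q} includes constraints beyond the packing ones, I would check that each is downward-closed, i.e.\ has nonnegative coefficients on $x$ with an upper bound. That is exactly what is needed for Step 1. The nonnegativity of $e_{c,t,s}$ should be stated explicitly, since it is the sole property of the cost that the argument uses.
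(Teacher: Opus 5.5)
Your proof is correct and is essentially the paper's argument: the paper removes $q'-q$ active assignments from an optimal $q'$-solution in one step (phrased as a contradiction) and uses the nonnegativity of the interference terms, exactly as you do one unit at a time. Your explicit $+\infty$ convention for infeasible levels, together with the observation that the constraints are downward-closed, adds a little rigor that the paper leaves implicit.
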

\noindent\emph{Proof sketch.} We prove by contradiction: if $q<q'$ yet we have $\text{EPFD}^\star(q')<\text{EPFD}^\star(q)$, then dropping $(q'-q)$ active elements from an optimal $q'$-solution yields a feasible $q$-solution with no larger EPFD since interference is non-negative, a contradiction. See details in \textbf{Appendix~\ref{app:non-decreasing}}.

\begin{theorem} \label{theorem:q-star-optimal}
Let
\[
q^\star = \max \{ q \mid \text{EPFD}^\star(q) < \tau/\Delta t \Delta F \overline{TH}_{RAS} \}.
\]
Then $q^\star$ is the optimal objective value of \textbf{P0} under the special-case assumptions, and any assignment that achieves $q^\star$ is an optimal assignment for our original problem \textbf{P0}
\end{theorem}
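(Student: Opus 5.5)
The argument has two directions: optimality of $q^\star$ as a value, and optimality of any assignment that attains it. Both use the fact that, in the special case, the feasible sets of \textbf{P0} and of $\textbf{P}_1(q)$ coincide except for the EPFD constraint and the cardinality constraint.

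First I make the reduction explicit. With $N_w=1$ and $x_d^c=1$, we have $x_{c,t,s}\in\{0,1\}$, and the objective of \textbf{P0} becomes $\sum_{c,t,s} x_{c,t,s}$. Constraints (\ref{eq:link_xz})--(\ref{eq:unique_z}) collapse to $\sum_s x_{c,t,s}\le 1$. After dividing (\ref{eq:epfd_constraint}) by $\Delta t$, it reads $\sum_{c,t,s} e_{c,t,s}x_{c,t,s}\le \frac{\tau}{\Delta t}\Delta F\,\overline{TH}_{RAS}=:B$. Let $\mathcal{F}$ be the set of binary assignments satisfying the spot-beam limit (\ref{eq:spot beam_limit}) and the single-satellite constraint (\ref{eq:single_sat}). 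Then \textbf{P0} is $\max\{|x| : x\in\mathcal{F},\ \sum e x\le B\}$, where $|x|=\sum x_{c,t,s}$. Likewise, $\textbf{P}_1(q)$ is $\min\{\sum e x : x\in\mathcal{F},\ |x|=q\}$. The whole proof is bookkeeping between these two descriptions.

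\emph{Achievability.} By definition of $q^\star$, $\text{EPFD}^\star(q^\star)<B$. Take an optimal solution $x^\star$ of $\textbf{P}_1(q^\star)$. It lies in $\mathcal{F}$ and satisfies $\sum e x^\star\le B$, so it is feasible for \textbf{P0} with objective $q^\star$. Hence $\mathrm{OPT}(\textbf{P0})\ge q^\star$.

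\emph{Upper bound.} Suppose some feasible $\hat x$ of \textbf{P0} has $|\hat x|=\hat q>q^\star$. Then $\hat x$ is feasible for $\textbf{P}_1(\hat q)$, so $\text{EPFD}^\star(\hat q)\le \sum e\hat x\le B$. Lemma~\ref{lemma:monotone} forces every $q\le\hat q$ to also satisfy $\text{EPFD}^\star(q)\le B$. The contradiction with the maximality of $q^\star$ therefore has to come from $\hat q$ itself.

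This step is the main obstacle, because the theorem defines $q^\star$ with a strict inequality while (\ref{eq:epfd_constraint}) is non-strict. If $\text{EPFD}^\star(\hat q)=B$ exactly, $\hat q$ is not admitted into the set defining $q^\star$, so no contradiction follows. I would handle this in one of two ways:
\begin{itemize}
\item Read the threshold as non-strict ($\le$), which matches (\ref{eq:epfd_constraint}). Then $\hat q$ belongs to the defining set and contradicts $q^\star$'s maximality.
\item Keep the strict inequality and note that it yields a valid lower bound that coincides with the optimum whenever equality $\text{EPFD}^\star(\hat q)=B$ is not attained (a measure-zero tie), flagging this explicitly.
\end{itemize}
In either reading, the set $\{q:\text{EPFD}^\star(q)\le B\}$ is a downward-closed integer interval by Lemma~\ref{lemma:monotone}, and its maximum equals $\mathrm{OPT}(\textbf{P0})$. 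Monotonicity is also what justifies searching for $q^\star$ by bisection or a linear scan over $q$. I also need $q^\star$ to be well defined. The set is nonempty since $q=0$ gives $\text{EPFD}^\star(0)=0$, assuming $B>0$. It is bounded above by $|S|\,N^{sb}|\mathcal{T}|$, and by infeasibility of $\textbf{P}_1(q)$ beyond the maximum matching size, where I set $\text{EPFD}^\star(q)=+\infty$.

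\emph{Optimal assignments.} Any assignment that is feasible for \textbf{P0} and achieves objective $q^\star$ attains the optimum just established, so it is optimal for \textbf{P0}. In particular, the $\textbf{P}_1(q^\star)$ minimizer is optimal, and among optimal solutions it has the least interference.
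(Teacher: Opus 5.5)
Your proposal is correct and follows the same two-step structure as the paper's proof: the $\textbf{P}_1(q^\star)$ minimizer is feasible for \textbf{P0}, and no $q>q^\star$ is feasible. At the one delicate step you are more careful than the paper, which claims $\text{EPFD}^\star(q) > \frac{\tau}{\Delta t}\Delta F\,\overline{TH}_{RAS}$ for every $q>q^\star$ ``from the definition of $q^\star$'', although that definition only gives $\ge$; so the tie case you flag is a real gap in the theorem as stated, and your non-strict reading (which matches the $\le$ test in Algorithm~\ref{alg:quietsat}) is the right repair.
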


\noindent\emph{Proof sketch.} Feasibility follows because any solution to $\mathbf{P}1(q)$ with $\text{EPFD}^\star(q) < \tau/\Delta t \Delta F\overline{TH}_{RAS}$ satisfies all constraints of \textbf{P0}. Optimality follows from Lemma~\ref{lemma:monotone}. See the details in \textbf{Appendix~\ref{app:theorem6-2}}.

\subsubsection{Flow Network Construction}
\label{sub:in:min-gen}
To solve $\mathbf{P}_1(q)$ subproblem, we construct a time-expanded flow network $G^f=(N,A)$ that encodes all feasible satellite--cell assignments over the scheduling interval.

\textbf{Nodes.}
For each time step $t\in\mathcal{T}$, we create a set of cell nodes $\{v_{c,t}\}$, one for each cell $c$ within the region of interest at time interval $t$, and a set of satellite nodes $\{u_{s,t}\}$, one for each satellite $s$ in the constellation. We further introduce a source node $r$ and a sink node $d$.

\textbf{Assignment edges.}
For every visible satellite--cell pair at time interval $t$, i.e., when satellite $s$ is visible from cell $c$, we add a directed edge $(v_{c,t},u_{s,t})$. Selecting this edge corresponds to assigning satellite $s$ to serve cell $c$ at time interval $t$. Each such edge has unit capacity and incurs a cost equal to the EPFD contribution $e_{c,t,s}$.

\textbf{Source and sink edges.}
To enforce per-cell and per-satellite constraints, we add edges $(r,v_{c,t})$ with capacity $1$ and zero cost, ensuring that each cell can be served by at most one spot beam at time interval $t$. Similarly, we add edges $(u_{s,t},d)$ with capacity $N^{sb}$ and zero cost, limiting the number of concurrent spot beams on the satellite at each time interval to $N^{sb}$ (Figure~\ref{fig:flow_graph}).

\textbf{Flow interpretation.}  A flow $f(r, v_{c,t})$, $f(v_{c,t}, u_{s,t})$, or $f(u_{s,t}, d)$ is a non-negative attribute of an arc in $G^f$ that must be less than the capacity of the arc and corresponds to the spot beam capacity that passes through it, 
The total cost of the flow of value $q$ from $r$ to $d$ in $G^f$ is thus equal to the aggregate EPFD induced by the assignments; consequently, minimizing flow cost yields the minimum-interference assignment for the given flow $q$ (see Theorem~\ref{theorem:equiv}).

\vspace{-0.05in}
\begin{lemma}\label{lemma:one-to-one}
For a given integer flow demand $q$, there is a one–to–one correspondence between feasible solutions $x_{c,t,s}$ of Problem $\textbf{P}_1(q)$, and \(q\)-unit, integral flows \(f\) in $G^f$.
\end{lemma}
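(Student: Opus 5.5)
We prove the correspondence by writing down two explicit maps and checking that they are mutually inverse. Both maps go through the assignment arcs $(v_{c,t},u_{s,t})$, whose flow values we identify with the decision variables $x_{c,t,s}$.

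\emph{From solutions to flows.} Let $x$ be feasible for $\mathbf{P}_1(q)$, with $x_{c,t,s}\in\{0,1\}$ and $x_{c,t,s}=0$ whenever $s$ is not visible from $c$ at $t$ (so $c\notin C_s^t$). Define
\begin{align*}
f(v_{c,t},u_{s,t}) &= x_{c,t,s}, \\
f(r,v_{c,t}) &= \textstyle\sum_s x_{c,t,s}, \\
f(u_{s,t},d) &= \textstyle\sum_c x_{c,t,s}.
\end{align*}
We then check three things. Flow conservation at every $v_{c,t}$ and $u_{s,t}$ holds by construction. The capacity bounds also hold: the single-satellite constraint (\ref{eq:single_sat}) with $x_d^c=1$ gives $f(r,v_{c,t})\le 1$, and the spot-beam constraint (\ref{eq:spot beam_limit}) gives $f(u_{s,t},d)\le N^{sb}$. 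Finally, the flow value is $\sum_{c,t}f(r,v_{c,t})=\sum_{c,t,s}x_{c,t,s}=q$, and $f$ is integral because $x$ is $0/1$.

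\emph{From flows to solutions.} Let $f$ be an integral $q$-unit flow. Each assignment arc has capacity $1$, so integrality forces $f(v_{c,t},u_{s,t})\in\{0,1\}$; set $x_{c,t,s}=f(v_{c,t},u_{s,t})$. We set $x_{c,t,s}=0$ for invisible pairs, which is consistent because no arc exists for such pairs. Conservation at $v_{c,t}$, combined with the capacity-$1$ source arc, yields (\ref{eq:single_sat}). Conservation at $u_{s,t}$, combined with the capacity-$N^{sb}$ sink arc, yields (\ref{eq:spot beam_limit}). Since $N_w=1$, constraints (\ref{eq:link_xz})--(\ref{eq:unique_z}) are implied by (\ref{eq:single_sat}) once we take $z_{c,t,s}=x_{c,t,s}$. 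The objective is $\sum x_{c,t,s}=\sum_{c,t}f(r,v_{c,t})=q$. Hence $x$ is feasible for $\mathbf{P}_1(q)$.

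\emph{Bijectivity.} The composition flow $\to x\to$ flow recovers $f$ on the assignment arcs by definition. It also recovers $f$ on the source and sink arcs: since every arc into $u_{s,t}$ comes from some $v_{c,t}$ and every arc out of $v_{c,t}$ goes to some $u_{s,t}$, conservation forces $f(r,v_{c,t})=\sum_s f(v_{c,t},u_{s,t})$ and $f(u_{s,t},d)=\sum_c f(v_{c,t},u_{s,t})$. A flow is therefore uniquely determined by its values on the assignment arcs. The composition $x\to$ flow $\to x$ is the identity directly.

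The main subtlety is this last uniqueness point. One must use the layered structure of $G^f$ (no arcs between layers, no parallel paths) to rule out two distinct flows inducing the same $x$. One must also be careful that $z$ is a determined auxiliary variable and not an independent degree of freedom. Otherwise the map from solutions to flows would fail to be injective. We therefore identify the feasible solutions of $\mathbf{P}_1(q)$ with their $x$-component, which is how $\mathbf{P}_1(q)$ is set up when $N_w=1$. As a by-product, the cost of $f$ is $\sum e_{c,t,s}f(v_{c,t},u_{s,t})=\sum e_{c,t,s}x_{c,t,s}$, which is the link needed for Theorem~\ref{theorem:equiv}.
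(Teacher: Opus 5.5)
Your proposal is correct and follows the same route as the paper. It sends one unit along $r\to v_{c,t}\to u_{s,t}\to d$ for each active triple, reads $x_{c,t,s}$ off the assignment arcs of an integral flow, and lets the unit source-arc capacities and $N^{sb}$ sink-arc capacities enforce the per-cell and per-satellite constraints, with flow value $q$ matching \eqref{eq:mincost-c3}. Your explicit check that the two maps are mutually inverse (conservation fixes the source and sink arcs from the assignment arcs) is something the paper leaves implicit, and citing \eqref{eq:mincost-c1}--\eqref{eq:mincost-c3} of $\textbf{P}_1(q)$ directly would let you drop the detour through $z$ and the \textbf{P0} constraints.
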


\noindent\emph{Proof Sketch.} The result follows from the one-to-one mapping between binary assignments $x_{c,t,s}\in \{0, 1\}$ and integral $q$-unit flows induced by the network construction, with arc capacities enforcing the constraints (see \textbf{Appendix~\ref{app:one-to-one}}).

\begin{theorem}
\label{theorem:equiv}
For fixed $q$, the minimum-cost $q$-unit flow on $G^f$ yields exactly $\text{EPFD}^\star(q)$.
\end{theorem}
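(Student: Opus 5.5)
The plan is to prove Theorem~\ref{theorem:equiv} by combining the bijection of Lemma~\ref{lemma:one-to-one} with the fact that this bijection preserves objective values. We then handle the minimization over integral versus fractional flows separately.

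\textbf{Step 1 (cost preservation).} Let $x$ be feasible for $\textbf{P}_1(q)$ and let $f$ be its image under Lemma~\ref{lemma:one-to-one}. That image is given by
\[
f(v_{c,t},u_{s,t})=x_{c,t,s},\qquad f(r,v_{c,t})=\sum_s x_{c,t,s},\qquad f(u_{s,t},d)=\sum_c x_{c,t,s}.
\]
Only assignment edges carry nonzero cost, so
\[
\text{cost}(f)=\sum_{(v_{c,t},u_{s,t})} e_{c,t,s}\, f(v_{c,t},u_{s,t})=\sum_{c,t,s} e_{c,t,s}\,x_{c,t,s}.
\]
This is exactly the objective of $\textbf{P}_1(q)$, the aggregate EPFD summed over $t$, up to the constant factor $\Delta t$ if the appendix formulation keeps it. The flow value is $\sum_{c,t} f(r,v_{c,t})=\sum_{c,t,s}x_{c,t,s}=q$, because $x_d^c=1$.

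\textbf{Step 2 (minimum over integral flows).} The correspondence is a bijection between feasible $x$ and integral $q$-unit flows, and it preserves cost. Hence
\[
\min\{\text{cost}(f): f \text{ integral, value } q\}=\text{EPFD}^\star(q).
\]
Both minima range over finite sets. Infeasibility is symmetric: if no $q$-unit integral flow exists, then $\textbf{P}_1(q)$ is infeasible too, and both values are $+\infty$.

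\textbf{Step 3 (integrality; main obstacle).} A min-cost flow solver optimizes over all real flows, so we must show the fractional relaxation does not go below the integral optimum. We will invoke the integrality theorem for minimum-cost flow. All capacities ($1$ and $N^{sb}$) and the demand $q$ are integers, and the node-arc incidence matrix of $G^f$ is totally unimodular. The flow polytope $\{f: 0\le f\le u,\ \text{conservation},\ \text{value } q\}$ therefore has integral vertices. A linear cost attains its minimum at a vertex, so some optimal flow is integral.

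As a constructive alternative, we can apply successive shortest paths. Negative cycles cannot arise initially because all $e_{c,t,s}\ge 0$, and each augmentation pushes an integer amount. The result is an integral min-cost $q$-unit flow, whose cost equals $\text{EPFD}^\star(q)$ by Step 2.

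The remaining care is to check that the graph is a DAG with nonnegative costs and that parallel arcs or layering across $t$ introduce no additional constraints. In this special case every layer is independent, since $N_w=1$, so this check is immediate.
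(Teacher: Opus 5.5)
Your proposal is correct and follows the paper's proof. Both arguments combine the bijection of Lemma~\ref{lemma:one-to-one} with the observation that only assignment arcs carry cost, so $\text{cost}(f)=\sum_{c,t,s} e_{c,t,s}x_{c,t,s}$, and then use the min-cost-flow integrality theorem to rule out a cheaper fractional flow; your total-unimodularity argument and the successive-shortest-path alternative just spell out the integrality result that the paper cites from Ahuja--Magnanti--Orlin.
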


\noindent\emph{Proof sketch.} From Lemma~\ref{lemma:one-to-one} we know there is one-to-one correspondence between integral solutions; and from integrality theorem \cite{AhujaMagnantiOrlin1993-Theorem9.10}, Since $G^f$ has integral capacities and demand, the min-cost flow problem admits an optimal integral solution (see full proof in \textbf{Appendix~\ref{app:flow_equiv}}).

\subsubsection{Uniform $x_d^c = x_d < 1$}
The min-cost flow formulation extends to any uniform demand $x_d\in(0,1]$ by multiplying each assignment-arc cost by $x_d$
and scaling each satellite node capacity to $\left\lfloor N^{sb}/x_d\right\rfloor$ (one spot beam can be time-shared between multiple low-demand cells). The min-cost flow solution remains optimal.

\subsection{General Case: Large Window Size and Heterogeneous Demand} \label{sub:hetero}

For arbitrary cell demand ($x_d^c$) and window size ($N_w>1$), \textbf{P0} is NP-hard (\textbf{Theorem \ref{theorem:nphard}}) and cannot be solved optimally by min-cost flow, hence, we design a heuristic as follows: 

\noindent $\bullet$ \textit{Window-Based Graph Construction:} We aggregate time slots into windows and construct nodes $(c,w)$ and $(s,w)$. An assignment persists throughout window $w$.

\noindent $\bullet$ \textit{Flow Splitting under Heterogeneous Demand:}
We allow each cell-window to contribute at most one unit of flow, that is: $\texttt{cap}(r,v_{c,w})=1$. One unit of flow corresponds to selecting exactly one serving satellite for $(c,w)$, and we encode heterogeneous demand only in the arc costs, by setting cost of $v_{c,w}$ to $u_{s,w}$ be $\sum_{t\in w} x_d^c\,e_{c,t,s}$.
This prevents flow splitting such that a single $(c,w)$ does not send flow to multiple satellites (which would violate $\sum_{s\in S}z_{c,w,s} \leq 1$).

\noindent $\bullet$ \textit{Scaled Satellite Spot beam Capacity:}
To approximate the per-satellite spot beam budget $\sum_c x_d^c\,x_{c,w,s}\le N^{sb}$ within a unit-flow, heterogeneous demand model, we treat one selected $(c,w)$ as consuming $\overline{x_d}$ units of capacity where $\overline{x_d}$ denotes the average demand of cells in the RoI, and set the capacity of edges from $u_{s,w}$ to $d$ to be $\Big\lfloor \frac{N^{sb}}{\overline{x_d}} \Big\rfloor$.
Thus the flow chooses a non-splitting set of assignments, while the satellite-side capacity is an average-demand proxy that we later repair via feasibility repair.

\noindent $\bullet$ \textit{Feasibility Repair:} After each min-cost flow solution, we compute true utilization $\rho_{s,w} = \sum_{c: x_{c,w,s}=1} x_d^c$.
Following that, if we have $\rho_{s,w} > N^{sb}$, we remove assignments with lowest EPFD until feasibility is restored.
If $\rho_{s,w} \ll N^{sb}$, we increase $\texttt{cap}(u_{s,w},d)$ and re-solve the flow to reclaim unused capacity.
This solve--repair--update process is repeated until convergence or iteration limit.

\subsection{\textsc{SkySched} Algorithm Overview}
\label{sub:algoverview}

Here, we provide an overview of the proposed \algname algorithm for Special Case \S\ref{sub:in-min-sub} and General case \S\ref{sub:hetero}. For the pseudo-code details, refer to \textbf{Appendix~\ref{app:alg}}.

The \textsc{SkySched} algorithm first determines satellite locations and orbits from TLE data (e.g., CelesTrak \cite{Celestrak}) at each time $t\in\T$, and uses the radio observatory specification from ODS to determine the geometric relationships among satellites, cells, and the RAS. For each visible satellite--cell pair in the RoI, it computes the EPFD contribution $\text{e}_{c,t,s}$ at the RAS, accounting for antenna patterns and off-axis angles $\{ \theta_{c,t,s}, \phi_{t,s}\}$. \textsc{SkySched} then solves for the maximum served cell--time pairs under the EPFD budget by performing a binary search over a flow demand $q$, and repeatedly solving min-cost flow instances on the Flow Network of Figure~\ref{fig:flow_graph}. In the special case ($N_w=1$ and uniform demand) it returns the largest $q^\star$ such that $\text{EPFD}^\star(q^\star)\le \tau/\Delta t\Delta F\overline{TH}_{RAS}$. In the general case with larger window size and heterogeneous demands ${x_d^c}$, it solves a scaled min-cost flow to propose assignments, then applies a feasibility-repair step that prunes overloaded $(t,s)$ pairs to satisfy the true spot-beam capacity $\sum_c x_d^c \le N^{sb}$, iterating
until convergence or an iteration cap, and returns the best feasible assignment found.

\textit{\textbf{Time Complexity Analysis.}} Let $|\overline{S}|$ be the average number of satellites visible per cell. We have $|N|=O(|C_{\text{RoI}}||W|+|\overline{S}||W|)$, $|A|=O(|C_{\text{RoI}}||\overline{S}||W|)$. From \cite{GoogleORToolsMinCostFlow}, each solve of min-cost flow using cost-scaling algorithm has complexity of $O\!\left(|N|^{2}\,|A|\,\log\!\big(|N|\cdot|\text{EPFD}|\big)\big)\right)$, where $|\text{EPFD}|$ is the largest arc cost. Binary search requires $O(\log(|C_{\text{RoI}}||W|))$ solves, hence, we can see that complexity of \algname\ is: $O\!\left(|N|^{2}\,|A|\,\log\!\big(|N|\cdot|\text{EPFD}|\big)\,\log\!\big(|C_{\text{RoI}}||W|\big)\right)$;

\begin{figure*}[t]
    \centering
    \includegraphics[width=0.92\linewidth]{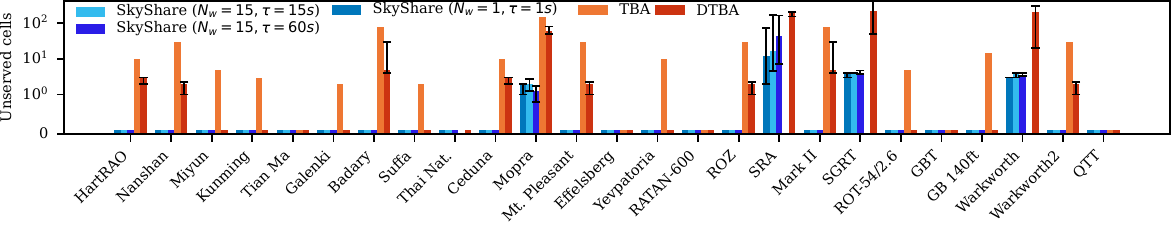}
    \vspace{-0.15in}
    \caption{Number of  cells unserved (Y-axis) by \sysabb, TBA and DTBA at \NSites RAS  worldwide (x-axis).}
    \label{fig:global_ras}
    \vspace{-0.15in}
\end{figure*}

\section{Evaluation}
\label{sec:evaluation}

This section presents the evaluation of our \sysabb system and compares it to SoTA LEO-RAS coexistence methods.

\vspace{-0.1in}
\subsection{Experimental Setup}

\textbf{LEO satellite constellation.} In our experiments, we use Supplemental General Perturbation-derived TLEs for the Starlink Gen2-mini constellation obtained from CelesTrak (detailed in \textbf{Appendix~\ref{app:leo_supgp_background}}). We simulate constellation dynamics over a 24-hour period from 2026:01:22T08:00:00 to 2026:01:23T08:00:00 UTC using an extended SpaceNet simulator \cite{spacenet}, derived from \textit{x}eoverse~\cite{xeoverse}. Experiments across additional time ranges obtained consistent results that we omit due to space constraints.

\noindent
\textbf{Antenna Model}. For RAS, we use the reference antenna from ITUR-RA.1631~\cite{ITUR-RA.1631-0}. 
Starlink antenna is modeled as a 2D, $25\times40$ half-wavelength uniform rectangular phased array (See \ref{app:star-antenna}).

\noindent
\textbf{Cell Model.}
The quasi-Earth-fixed cells are constructed using an H3 hexagonal grid at resolution 5~\cite{h3} that mimics the Starlink spot beam footprint ~\cite{sorensen2024fixed}. At this resolution, each cell has an average area of 252.9 $\text{km}^2$
~\cite{h3-measure}.

\noindent
\textbf{Observation Metadata.}
We focus on challenging intervals by sampling time ranges that include at least one satellite passing close to the boresight (angular separation below $1^\circ$). We model an RAS conducting a series of observations whose schedule is provided using ODS. The observation declination is set to the RAS latitude, which guarantees that during the full day simulation, there exists a time when the observation direction points directly overhead of the RAS.
We sample $10$~minute passages for each latitude at times when there are passages of satellites close to the RAS boresight. 

Unless stated otherwise, we use Ku-band downlink (10.7--12.7~GHz), Interference threshold of $0.4$~Jy (based on reported TBA simulations from \cite{NRAOODS2025}), 
$x_d^c=1$, so that every served cell requires one full spot beam, providing a conservative full-load setting, window size $N_w=\{1,15\}$~s, observation duration ($\tau$) equal to 1-second, 15-second, and 60-seconds, 
Other network parameters are summarized in \textbf{Appendix~\ref{app:table}}.

\vspace{-0.1in}
\subsection{Metrics and Baselines}

\textbf{Metrics.}
We report:
(i) average number of unserved cells during observation time, showing cell coverage loss due to coexistence,
(ii) time-averaged EPFD, the ITU-defined protection metric,
(iii) achieved RAS sensitivity, to signify the scientific impact of coexistence on observation,
(iv) algorithm scalability and optimality, and 
(v) robustness to system errors. We use $U$ to denote the number of unserved cells (over the observation duration). Thus, $U=K$ denotes operation when exactly $K$ cells are unserved.

\noindent \textbf{Baselines.}
We compare \sysabb against: 

$\bullet$ \textbf{TBA}: fixed-angle beam avoidance near RAS boresight with parameters selected such that worst-case scenarios would not result in harmful interference levels (See \S\ref{sub:tba}).

$\bullet$ \textbf{Dynamic TBA (DTBA)}: We adopt DTBA from TBA, assuming that at each time, parameters (inner, and outer angular threshold, exclusion zone, see \S \ref{sub:tba}) can be changed to achieve the best cell coverage. This method is not currently adopted by Starlink, which uses TBA \cite{NRAOODS2025}.

\vspace{-0.1in}
\subsection{Experimental Results}

\noindent{\textbf{Global RAS Sites.}}
To assess global impact, we consider all single-dish RAS sites whose observing frequencies directly overlap with Starlink downlink. This includes \NSites~out of approximately $50$ operational Ku-band RAS sites worldwide.
The remaining Ku-band RAS sites are interferometric arrays, which are generally more tolerant to satellite interference when operating in array mode~\cite{ITU-RA769-2} and when elements are widely separated~\cite{van2011radio}. Although arrays can perform single-dish observations~\cite{van2011radio,ITU-RA769-2}, we choose to focus on the most interference-sensitive class of RAS.

Across the considered $25$ Ku-band RAS sites worldwide, Figure~\ref{fig:global_ras} shows that TBA and DTBA leave large geographic regions underserved in order to protect RAS operation; in contrast, \sysabb reduces total unserved cells by over \textbf{90.68\%} on average. This recovery enables LEO constellations to serve substantially more users globally, particularly in underserved regions, while maintaining strict protection for scientific observation sites.

\begin{figure}
    \centering
    \includegraphics[width=1\linewidth]{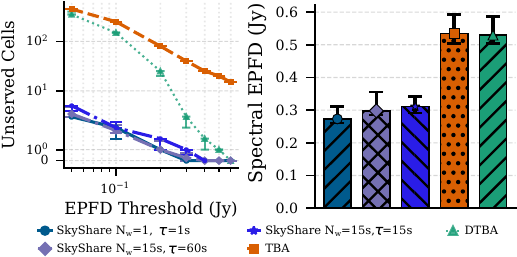}
      \textbf{(a) Unserved vs threshold.\hfill
      (b) Mean EPFD for U=0.}
      \vspace{-0.1in}
    \caption{\sysabb vs. Baselines: (a) Unserved vs Threshold. (b) Mean spectral EPFD with U = 0.}
        \vspace{-0.1in}
    \label{fig:exp1}
\end{figure}

\noindent{\textbf{Coverage and EPFD Interference Tradeoff.}}
We evaluate a representative $30$~m RAS telescope at latitude $40^\circ$, reflecting the typical deployment of Ku-band RAS sites in both antenna size and location (see \S\ref{sec:ras_telescope_operation}, Figures~\ref{fig:antenna_hist} and \ref{fig:ras_latmpact}). Figure~\ref{fig:exp1}(a) shows that \sysabb maintains near-zero unserved cells even under stringent EPFD thresholds, whereas TBA rapidly sacrifices coverage as interference constraints tighten. Increasing TBA inner/outer avoidance angles cannot reduce EPFD without substantial service loss, since doing so excludes satellites that would otherwise serve RoI cells. In contrast, \sysabb exploits constellation-wide diversity to preserve coverage while meeting interference limits. Figure~\ref{fig:exp1}(b) shows the mean for EPFD when forcing TBA, DTBA, and \sysabb to have zero cell loss ($U=0$). For $N_w=1$~s and $N_w=15$~s, \sysabb reduces mean EPFD by more than 48\% and 41\%, respectively, showing the advantage of constellation-level coordination. 

\begin{figure}[t]
    \centering
    \begin{subfigure}[t]{0.59\linewidth}
        \centering
        \includegraphics[width=\linewidth]{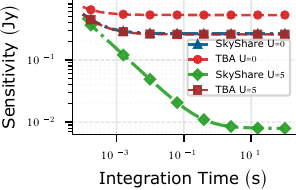}
        \caption{Sensitivity vs Integration time.}
        \label{fig:sens-time}
    \end{subfigure}
    \begin{subfigure}[t]{0.37\linewidth}
        \centering
        \includegraphics[width=\linewidth]{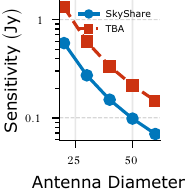}
        \caption{Sensitivity vs. diameter.}
        \label{fig:sens-diameter}
    \end{subfigure}
    \vspace{-0.12in}
    \caption{Impact of residual satellite interference on achievable observation sensitivity.}
    \vspace{-0.16in}
\label{fig:sensitivity_bundle}
\end{figure}

\noindent{\textbf{Scientific Impact: RAS Observation Sensitivity.}}
Residual satellite interference could introduce an integration independent noise floor that limits achievable RAS sensitivity. Figure~\ref{fig:sensitivity_bundle}(a) shows that for a $30$~m telescope under full coverage ($U=0$), \sysabb achieves a sensitivity floor of $0.264$~Jy versus $0.530$~Jy for TBA (a $3$\,dB improvement). Allowing modest coverage loss further widens this gap: at $U=5$, \sysabb provides up to $15$\,dB better sensitivity. Figure~\ref{fig:sensitivity_bundle}(b) shows similar gains across antenna diameters; even at $D=60$~m, \sysabb improves sensitivity by $3.38$\,dB, indicating that constellation-wide coordination directly translates into meaningful improvements in scientific sensitivity.

\label{subsec:impact_ras}

\begin{figure}
    \centering
    \includegraphics[width=1\linewidth]{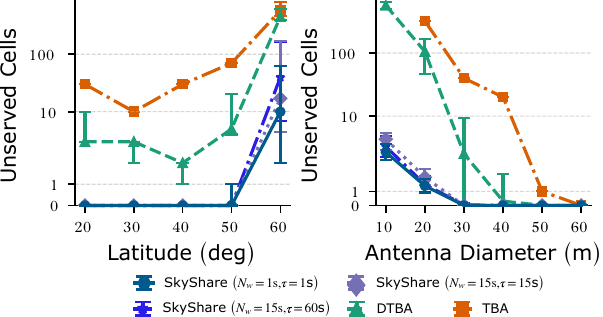}
      \textbf{(a) Unserved vs. RAS lat.\hfill
      (b) Unserved vs diameter.}
      \vspace{-0.12in}
    \caption{\textsc{\sysabb} vs TBA and DTBA baselines. (a) Unserved cells vs RAS latitude. (b) Unserved cells vs RAS antenna diameter.}
    \vspace{-0.2in}
    \label{fig:unserved-vs-d-lat}
\end{figure}

\noindent{\textbf{Geographic and Physical Factors.}}
Figure~\ref{fig:unserved-vs-d-lat} (a) shows that TBA degrades sharply at high latitudes due to reduced satellite diversity. This is specifically relevant for European Ku-band RAS sites above $50^\circ$, e.g., Effelsberg~\cite{effelsberg-mpifr}, Mark II~\cite{markii-jodrell}. At $60^\circ$, \sysabb improves the coverage by $94.85\%$ and $92.72\%$ relative to TBA and DTBA, respectively.

Larger dishes have narrower receive beams and lower susceptibility to interference. Figure~\ref{fig:unserved-vs-d-lat}(b) shows that for RAS diameters above $30$~m, \sysabb achieves full coverage under a $0.4$~Jy EPFD threshold. Even for $10$ m telescopes, \sysabb reduces unserved cells by $99.33\%$ whereas TBA requires large exclusion zones due to increased sidelobe coupling. Together, these results show that \sysabb consistently outperforms reactive TBA and DTBA baselines across diverse geographic locations and telescope characteristics.

\begin{figure}[t]
  \centering
  \includegraphics[width=1\linewidth]{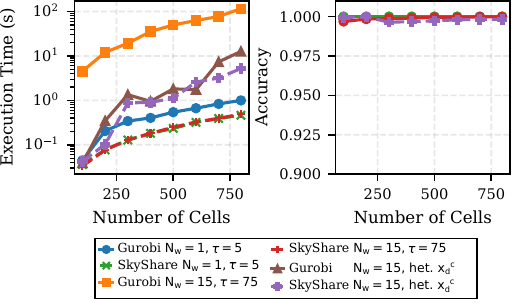}
  \textbf{(a) Exec. Time vs RoI size.\hfill
  (b) Accuracy vs RoI size.}
  \vspace{-0.1in}
  \caption{Scalability and Optimality Analysis.}
  \vspace{-0.18in}
  \label{fig:exp5_runtime_and_accuracy_vs_cells}
\end{figure}

\noindent{\textbf{Scalability and Optimality.}} 
We compare \sysabb with Gurobi~\cite{gurobi}, an exact Mixed Integer Linear Program (MILP) solver. Figure~\ref{fig:exp5_runtime_and_accuracy_vs_cells}(a) shows that Gurobi runtime grows exponentially with RoI size, while \sysabb scales polynomially. Figure~\ref{fig:exp5_runtime_and_accuracy_vs_cells}(b) shows \sysabb achieves over $99\%$ of optimal across tested instances. For the heterogeneous-demand setting, we assign high demand ($x_d^c \in [0.9,1.0]$) to $90\%$ of cells and low demand ($x_d^c \in [0,0.1]$) to the remaining $10\%$, and report averages over five random demand samples.
\algname remains above $99\%$ of the Gurobi optimum in heterogeneous-demand case, while retaining lower runtime at larger RoIs.

\noindent{\textbf{Robustness to System Errors.}} 
We stress-test \sysabb under Gaussian perturbations in ODS pointing direction and satellite position uncertainty. Under pointing error in Figure~\ref{fig:point-error}, \sysabb incurs $0\%$ mean EPFD violations for RMS error $\leq 12$ deg, rising gradually to at most $38\%$ EPFD violations beyond this regime, while outperforming both TBA and DTBA across all error levels.  Figure~\ref{fig:pos-error} shows that \sysabb is sensitive to satellite position error, with mean EPFD violations exceeding $80\%$ at $30$\, km error. By contrast, the TBA baseline is resilient due to its conservative exclusion of $30$ cells around the RAS site, while \sysabb serves all cells.
This exposes a key tradeoff: \sysabb relies on accurate satellite localization, which is feasible in practice given star-tracker–based positioning used by operators such as Starlink~\cite{starlink_stargaze_2024}. Finally, DTBA exhibits intermediate robustness by alternating between $0$ and $30$ unserved cells over the total observation duration, averaging $3.53$ unserved cells across the observation duration.

\begin{figure}[t]
    \centering
    \begin{subfigure}[t]{0.43\linewidth}
        \includegraphics[width=\linewidth]{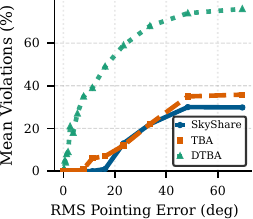}
        \caption{Violations Due to Pointing Mismatch}
        \label{fig:point-error}
    \end{subfigure}
    \hfill
    \begin{subfigure}[t]{0.45\linewidth}
        \centering
        \includegraphics[width=\linewidth]{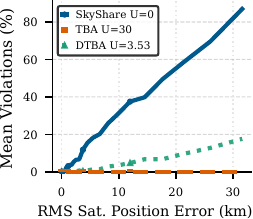}
        \caption{Violations Due to Position Prediction Error}
        \label{fig:pos-error}
    \end{subfigure}
    \vspace{-0.12in}
    \caption{Stress Tests: EPFD Threshold Violations vs (a) RAS Pointing error, and (b) Satellite Position Error}
    \label{fig:combined_3}
    \vspace{-0.2in}
\end{figure}

\section{Discussion and Limitations} \label{sec:discussion}

\noindent{\textbf{Predictive coordination versus reactive avoidance.}}
Our results suggest that LEO--RAS coexistence is inherently a constellation-wide optimization problem rather than a local reactive avoidance task. Existing beam-avoidance methods treat satellites independently and ignore aggregate sidelobe interference, often causing substantial coverage loss without proportional interference reduction. By leveraging predictable satellite motion and scheduled observations, \sysabb enables proactive, network-wide coordination that improves coverage while protecting RAS operations.

\noindent{\textbf{Operational assumptions.}}
\sysabb assumes centralized control-plane scheduling and access to observatory metadata via ODS, consistent with Starlink's global satellite--user assignment architecture~\cite{tanveer2023making}. 
Distributed operation is left as future extension, trading signaling overhead for decentralized control. Assignments remain fixed within each scheduling window; the evaluated $15$-s windows therefore bound reassignment and handover opportunities.

\noindent{\textbf{Single-Beam vs. Multi-Beam Service.}}
In \sysabb, at most one spot beam serves each cell. This model targets RoI cells whose demand does not exceed one spot beam's capacity, a plausible regime in low-density regions around many RAS sites~\cite{ITU-R_RA.2259}.
Multi-beam service would allow high-demand cells to receive concurrent
service from multiple beams or satellites.
To support this in \textbf{P0}, $x_d^c$ would be allowed to exceed one, and $x_{c,t,s}\in[0,x_d^c]$ would represent the total normalized spot-beam capacity allocated by satellite $s$ to cell $c$. Constraint~\eqref{eq:single_sat} would remain as the aggregate-demand bound $\sum_s x_{c,t,s}\leq x_d^c$. Constraint~\eqref{eq:unique_z} would be relaxed by replacing its current upper bound of one with the permitted number of concurrent serving beams or satellites, while Constraint~\eqref{eq:link_xz} would continue to enforce window consistency for every active satellite--cell assignment. This concurrency limit may depend on the available frequency and polarization resources, e.g., earlier SpaceX filings modeled one co-frequency beam per Ku-band spot in their interference analysis~\cite{FCC2148}.
Under this extension, EPFD remains additive because each satellite--cell allocation contributes $e_{c,t,s}x_{c,t,s}$ to Constraint~\eqref{eq:epfd_constraint}.
If frequency and polarization are represented explicitly, the corresponding contribution becomes $e_{c,t,s}^{f,p}x_{c,t,s}^{f,p}$.
However, splitting cell demand across satellites increases scheduling
complexity.
This expands the assignment space and requires a splittable-flow formulation or demand discretization; the feasibility-repair must also preserve each cell's aggregate allocated capacity while resolving satellite-capacity and EPFD violations. Schedules produced under the current model remain feasible in a multi-beam architecture, but do not exploit its larger feasible assignment set.

\noindent{\textbf{Limited Networking Scope.}}
This work focuses on LEO and RAS coexistence with the objective of preserving cell coverage under interference constraints. We do not explicitly model network-level metrics such as latency, queueing, power control dynamics, or inter-cell interference; incorporating such objectives into the scheduler remains future work. Our evaluation also focuses on a single operator (Starlink); multi-operator scenarios remain future work.


\noindent{\textbf{System Model Fidelity.}}
Our study uses ITU-compliant EPFD modeling, public TLE data, and representative antenna patterns. If measured antenna patterns exhibit poorer sidelobe suppression, the per-assignment EPFD terms $e_{c,t,s}$ could increase for some cells. Because \sysabb computes both the optimized RoI contribution and the maximum non-RoI contribution using the specified antenna pattern, these changes would be reflected directly in the RoI calculation. To maintain the same protection threshold, the selected RoI may grow, increasing scheduling complexity and runtime, or a larger non-RoI interference budget $\Delta TH_{RAS}$ may be reserved, reducing the remaining budget $\overline{TH}_{RAS}$ available to RoI scheduling and potentially increasing the number of unserved cells. Conversely, better sidelobe suppression may shrink the RoI and leave a larger EPFD budget for RoI scheduling. While these models capture key effects, validating \sysabb under operational traffic traces and real-world LEO network and antenna measurements remains future work.

\noindent{\textbf{Interference Scope.}}
\sysabb applies to the intended co-frequency transmissions in the 10.7--12.7~GHz range described in \S\ref{sub:sidelobes}. We do not model leakage into adjacent protected bands, harmonic or spurious emissions outside the satellite operating band. Such emissions are not represented by the current $e_{c,t,s}$ coefficients, and their sidelobe patterns may differ substantially from the in-band phased-array pattern. Frequency-dependent emission masks and antenna-pattern measurements would be required to model these effects and construct band-specific coefficients $e_{c,t,s}^{f}$ and enforce separate EPFD constraints for each affected band.

\noindent{\textbf{Multiple simultaneous RAS observations.}}
Our evaluation considers protecting a single RAS observation at a time. In practice, multiple observatories may operate concurrently and point to different regions of the sky, introducing overlapping EPFD constraints. While \sysabb can incorporate multiple constraints in principle, jointly optimizing across many simultaneous observations
for multi-observatory protection is an important direction for future work.


\vspace{-1em}
\section{Related Work}
\label{sec:related}

\noindent{\textbf{LEO Satellite and RAS Coexistence.}}
LEO-RAS interference has been explored recently with efforts from satellite network operators and RAS sites \cite{di2023unintended, grigg2023detection, nhan2024spectrum, nhan2025ods, Starlink}. Notably, intended and unintended LEO emissions have been detected at SKA-Low site in SKA-Low frequency range \cite{grigg2023detection}, with \cite{di2023unintended} detecting unintended emissions in LOFAR RAS. ODS framework \cite{NRAOODS2025} has been proposed as a "self-reporting system" to share observation schedules with satellite network operators,
facilitating coexistence of RAS and satellite networks. Starlink has introduced TBA \cite{Starlink}
to adjust its LEO constellation operation for RAS protection. The ODS-based TBA system has been tested through coordination with NRAO to verify correct operation \cite{nhan2025ods, NRAOODS2025}. While effective, TBA is based on static, rule-based mechanism and could negatively affect network availability, specially for lower frequency bands, smaller RAS antenna diameter size, or wider spot beam footprints, which results in larger angular exclusion, and affects satellites for a longer time~\cite{nhan2025ods}; in contrast, \sysabb enables predictive, constellation-wide coordination that explicitly optimizes interference and service, achieving substantially better coexistence outcomes.

\noindent{\textbf{Dynamic Spectrum Sharing.}}
Dynamic spectrum access has been extensively studied in terrestrial wireless, including cognitive radio for opportunistic access to underutilized spectrum, and Database-driven spectrum access  \cite{haykin2005cognitive, akyildiz2006next, zheleva2023radio, sohul2015spectrum}. However, many sharing frameworks operate with response times on the order of minutes to hours (e.g., TVWS and CBRS) \cite{zheleva2023radio}, reflecting assumptions that are often compatible with relatively slowly varying conditions rather than highly time-varying geometries \cite{zheleva2023radio}. Prior work notes that many spectrum sensing approaches assume stationary sensors, motivating explicit treatment of mobility \cite{min2009impact}. In contrast, \sysabb exploits the predictability of satellite motion and observation schedules to enable proactive, constellation-wide coordination tailored to fast time-varying geometry.

\noindent{\textbf{Satellite Network Control.}}
Prior work on satellite network control has explored routing
\cite{handley2018delay, li2024stable}, gateway selection
\cite{abubakar2024choosing, jang2025geo}, and traffic engineering
\cite{wu2025sate} in rapidly changing LEO constellations. Related
end-to-end control, such as congestion control, also targets performance
under LEO dynamics~\cite{lai2025leocc}. These systems optimize throughput
or latency but do not consider protected spectrum users or interference
constraints. Our work instead treats radio astronomy protection as a
first-class constraint in satellite scheduling.

\vspace{-0.5em}
\section{Conclusion}
\label{sec:conclusion}

We present \sysabb, a constellation-wide sky sharing system that enables predictive, interference-aware scheduling of LEO satellite spot beams to protect radio astronomy, while preserving LEO network coverage. By integrating real time ODS data, high-fidelity orbital prediction, and ITU-compliant EPFD modeling, \sysabb transforms RAS protection requirements into network control decisions. Using EPFD-budgeted Region of Interest (RoI) abstraction, we formulate LEO--RAS coexistence as a MILP scheduling problem and propose a flow-based algorithm that achieves optimal solutions in special cases and scalable near-optimal solutions in the general setting. Extensive evaluation across \NSites radio astronomy sites shows that \sysabb substantially reduces service disruption compared to beam avoidance while maintaining RAS protection. These results demonstrate that constellation-wide coordination is both practical and necessary for sustainable spectrum sharing between satellite networks and radio astronomy observations.

\vspace{-0.5em}
\begin{acks}
This work is supported by NSF Awards 2235140 and 2443035.
\end{acks}

\clearpage
\newpage
\bibliographystyle{ACM-Reference-Format}
\bibliography{references}

\clearpage
\appendix

\section{Appendices}

\subsection{Starlink Downlink Antenna Model}
\label{app:star-antenna}
For Starlink, the spot beams operate over an aggregate frequency spectrum of 2000 MHz, which is divided into 8 channels of 250 MHz each. Each gen2-Mini satellite is equipped with 5 downlink phased-array antennas, with each antenna capable of illuminating 8 spot beams with 2 polarizations. Therefore, the total number of downlink beams for each satellite is 80. 
We model each Starlink downlink antenna as a $25\times40$ half-wavelength uniform rectangular array, chosen to approximate a $34$~dB boresight gain.

The gain of each spot beam resulting from a subarray of the antenna can be specified as the product of the element gain and array factor, assuming a 2D rectangular array, using the following equation:
\begin{equation} 
G_{T\text{x},s,c} [dB] = G_{Elem} [dB] + 10\log_{10}(N_x \times N_y),
\label{eq:phased-array-gain}
\end{equation}
where $G_{Elem}$ is the gain of a single element in the phased array.

\subsection{Satellite beam Capacity Model}
\label{app:spot beam-cap}

ITU regulations restrict maximum ground-level PFD in Ku-band to protect terrestrial services~\cite{ITU-R_SF.1482,ITUMaxPower}. Satellites dynamically control EIRP so that ground PFD remains near this limit regardless of elevation angle.

Let $\wedge_{s,c}$ denote the capacity of the spot beam from satellite $s$ serving cell $c$. The received power at the user terminal is
\begin{equation}
P_{R\text{x},s,c} = 10^{\text{PFD}_{max}/10} \cdot A_{eff} \cdot B,
\end{equation}
where $A_{eff}$ is the effective aperture area of the receiver antenna:
\begin{equation}
A_{eff} = \frac{G_{R\text{x}}\lambda^2}{4\pi}.
\end{equation}
The achievable downlink capacity is then
\begin{equation}
\wedge_{s,c} =
\mathbf{1}\{\text{el}_{s,c} \ge 25^\circ\}
B \log_2 \left(1 + \frac{P_{R\text{x},s,c}}{k T_{sys} (10^6 B)} \right),
\label{eq:cap_nominal}
\end{equation}
where $k$ is Boltzmann’s constant, $T_{sys}$ is receiver noise temperature, and $\text{el}_{s,c}$ is the elevation of satellite $s$ with respect to cell $c$.

This capacity model allows \sysabb to determine if a satellite can satisfy cell demand while respecting both interference and hardware constraints.

\subsection{\sysabb Practicality} \label{sec:practicality}
We discuss how \sysabb can be deployed within the operational constraints of current LEO satellite networks.

\noindent \textbf{Control-plane operation.}
\sysabb operates entirely at the satellite network control plane and requires no changes to satellite hardware or onboard beamforming logic. Scheduling is performed using a rolling-horizon approach: beam--cell assignments are periodically recomputed using predicted satellite states and ODS metadata.

Control-plane timescales are governed by beam reconfiguration latency, handover stability, and management overhead. In practice, scheduling windows on the order of several seconds to tens of seconds are sufficient to track constellation dynamics while avoiding excessive configuration churn. Because satellite motion is highly predictable, schedules can be computed and deployed proactively, before satellites enter sensitive interference regions.

\noindent \textbf{Scalability and integration.}
\sysabb scales with constellation size by restricting optimization to finite regions of interest (RoIs) surrounding observatories; cells outside protected regions continue to operate under standard network scheduling policies. The modular ODS interface allows new observatories and updated protection requirements to be incorporated without changes to satellite hardware or onboard software.
Although we present \sysabb as a centralized scheduler, the same design can be extended to hierarchical or distributed control (e.g., using inter-satellite links) without modifying the core scheduling logic.

Overall, \sysabb demonstrates that predictive, constellation wide control-plane scheduling can be integrated into existing LEO network architectures to enable practical spectrum coexistence while preserving service coverage.

\subsection{Notation Table} \label{app:notation_table}

We summarize the notation used in this study in the order they appear in the paper in Table \ref{tab:notation}--\ref{tab:notation2}. Table~\ref{tab:notation3} summarizes notation used only in the appendix (e.g., in derivations, proofs, and auxiliary algorithms).

\begin{table}[!t]
\caption{Table of Notations} \label{tab:notation}
\vspace{-0.1in}
\centering
\vspace{-0.1in}
\begin{tabular}{p{2cm}p{5.7cm}}
\hline \hline
    Symbol & Description \\
    \hline
    $S$ & Set of all satellites in the constellation\\
    $N^{sb}$ & Number of simultaneous spot beams on each satellite\\
    $C^{all}$ & Set of all cells on the ground\\
    $N_w$ & Duration which satellite spot beam--cell assignment remains fixed\\
    $L_R$ & RAS telescope coordinates\\
    $E_R$ & RAS telescope elevation\\
    $\Omega_O$ & RAS boresight right ascension\\
    $D_O$ & RAS boresight declination\\
    $F_R$ & RAS operating frequency\\
    $\Delta F$ & RAS observation Bandwidth\\
    $T_{Start}$ & RAS observation start time\\
    $T_{End}$ & RAS observation end time\\
    $\tau$ & RAS observation duration\\
    $RO$ & Radio observation data tuple\\
    $\Delta t$ & Time discretization interval length\\
    $\T$ & Set of all discretized time intervals\\
    $W$ & Set of all scheduling windows\\
    $\theta_{c,t,s}$ & Angle between cell $c$, satellite $s$, and RAS at time $t$\\
    $G_{T\text{x}}(\theta_{c,t,s})$ & Satellite downlink transmit gain at off-axis angle $\theta_{c,t,s}$\\
    $\phi_{t,s}$ & Angle between RAS boresight direction, RAS, and satellite s at time t\\
    $P_{T\text{x},c,t,s}$ & Transmit power allocated to the spot beam of satellite $s$ serving cell $c$ at time interval $t$ \\
    
    $G_{R\text{x}}(\phi_{t,s})$ & RAS receiver gain at off-axis angle $\phi_{t,s}$\\
    $G_{R\text{x},max}$ & Peak gain of the RAS receive antenna \\
    $D_{t,s}$ & Distance between satellite $s$ at time $t$ and RAS\\
    \text{PFD}($\theta_{c,t,s}$) & The power flux density (PFD) at the RAS from satellite $s$ transmitting to cell $c$ at time interval $t$\\
    $\text{EIRP}(\theta_{c,t,s})$ & The effective isotropic radiated power toward the RAS at off-axis angle $\theta_{c,t,s}$\\
    $e_{c,t,s}$ & The EPFD contribution at the RAS telescope from a single spot beam of satellite $s$ serving cell $c$ at time interval $t$\\
    $x_{c,t,s}$ & The normalized capacity of a spot beam from satellite $s$ dedicated to serving cell $c$ at time interval $t$\\
    $\text{EPFD}(t)$ & EPFD at the RAS telescope at time interval $t$ \\
    $TH_{RAS}$ & Detrimental interference level\\
    $\Delta TH_{RAS}$ & Residual EPFD budget dedicated to non-RoI cells\\
    $\overline{TH}_{RAS}$ & $TH_{RAS} - \Delta TH_{RAS}$\\
\hline 
\end{tabular}
\end{table}

\begin{table}
\caption{Table of Notations (Continuation)} \label{tab:notation2}
\vspace{-0.1in}
\centering
\vspace{-0.1in}
\begin{tabular}{p{2cm}p{5.7cm}}
\hline \hline
    Symbol & Description \\
    \hline
    $\phi_{TH}$ & Angular threshold used for excluding satellites for non-RoI, operator-default assignment policy\\
    $R$ & Distance of served cell from RAS for RoI inclusion\\ $\text{EPFD}_{\text{RoI}}^{(\phi_{TH}, R)}(t)$ & EPFD contribution from RoI cells in RoI calculation procedure\\
    $\text{EPFD}_{\text{Non-RoI}}^{(\phi_{TH}, R)}(t)$ &
    EPFD contribution from non-RoI cells in RoI calculation procedure\\
    $x_d^c$ & Upper-bound on normalized cell demand\\
    $z_{c,w,s}$ & Auxiliary variable showing if cell $c$ is assigned to satellite $s$ in window $w$.\\
    $x_{c,w,s}$ & The normalized capacity of a spot beam from satellite $s$ dedicated to serving cell $c$ at window $w$ \\
    $C_s^t$ & Set of all cells that are considered for satellite $s$ at time $t$ by \sysabb\\
    $G^f$ & Flow network that encodes all feasible satellite--cell assignments over the scheduling interval\\
    $N$ & Set of all nodes in $G^f$\\
    $A$ & Set of all edges in $G^f$\\
    $v_{c,t}$ & Cell node for cell $c$ at time $t$ in $G^f$ \\
    $u_{s,t}$ & Satellite node for satellite $s$ at time $t$ in $G^f$ \\ 
    $r$ & Source node at flow network $G^f$ \\
    $d$ & Sink node at flow network $G^f$\\
    $f(r, v_{c,t})$ & Flow between source node $r$ and cell node $v_{c,t}$ \\
    $f(v_{c,t}, u_{s,t})$ & Flow between cell node $v_{c,t}$ and sat node $u_{s,t}$\\
    $f(u_{s,t}, d)$ & Flow between satellite node and sink node\\
    $\text{EPFD}^\star(q)$ & Minimum achievable EPFD for serving exactly q cell, time pairs \\
    $q^\star$ & Maximum value of q achieving $\text{EPFD}^\star(q) < \tau/\Delta t \Delta F \overline{TH}_{RAS}$ \\
    $x_d$ & Uniform demand of cells\\
    $\overline{x_d}$ & Average cell demand when cells have heterogenous demand\\
    $v_{c,w}$ & cell node at window $w$ in $G^f$\\
    $u_{s,w}$ & Satellite node at window $w$ in $G^f$\\
    $\rho_{s,w}$ & Spot-beam utilization of satellite $s$ at window $w$\\
    $|EPFD|$ & Largest $\text{EPFD}$ value in Flow graph $G^f$ \\
    $|\overline{S}|$ & Average number of satellites overhead a cell around the RAS site.\\
    $C_{\text{RoI}}$ & Set of all cells in the RoI\\

\hline 
\end{tabular}
\end{table}

\begin{table}[!t]
\caption{Table of Notations (Appendix Additions)}
\label{tab:notation3}
\centering
\begin{tabular}{p{2cm}p{5.7cm}}
\hline \hline
Symbol & Description \\
\hline
$\wedge_{s,c}$ & Capacity of the spot beam from satellite $s$ serving cell $c$ (Appendix~\ref{app:spot beam-cap}) \\
$P_{R\text{x},s,c}$ & Received power at the user terminal for the link from satellite $s$ to cell $c$ \\
$\text{PFD}_{max}$ & Maximum ground-level PFD limit used in the spot-beam capacity model \\
$A_{\text{eff}}$ & Effective aperture area of  antenna\\
$B$ & Bandwidth of satellite spot-beam\\
$G_{R\text{x}}$ & Satellite user receiver antenna gain\\
$k$ & Boltzmann's constant \\
$T_{sys}$ & Receiver system noise temperature\\
$\text{el}_{s,c}$ & Elevation angle of satellite $s$ from cell $c$ \\
$\phi_{TH}^{\max}$ & Maximum angular threshold considered in the online RoI parameter sweep \\
$R_{\max}$ & Maximum geographic radius considered in the online RoI parameter sweep \\
$G_{T\text{x},s,c}$ & Spot-beam transmit gain for satellite $s$ serving cell $c$ in the phased-array gain model \\
$G_{Elem}$ & Gain of a single antenna element in the Starlink phased-array model \\
$N_x$ & Number of phased-array elements along the $x$ dimension \\
$N_y$ & Number of phased-array elements along the $y$ dimension \\
$f(a)$ & Flow on arc $a \in A$ in the min-cost flow formulation \\
$\text{cost}(a)$ & Cost of arc $a \in A$ in the min-cost flow formulation \\
$\texttt{cap}(i,j)$ & Capacity of arc $(i,j)$ in the heuristic and prune helper \\
$I_{\max}$ & Iteration cap for the general-case \algname heuristic \\
$\mathcal{F}$ & Forbidden set of assignments/edges excluded from subsequent heuristic iterations \\
$\mathcal{A}$ & Assignment set returned by the scaled min-cost flow in the general-case heuristic \\
$\mathcal{A}^{\star}$ & Best feasible assignment set found so far in the general-case heuristic \\
$\mathcal{A}_{\text{feas}}$ & Feasible assignment set after overload pruning in the general-case heuristic \\
$\mathcal{L}$ & Temporary list of assignments for an overloaded $(t,s)$ pair in the prune helper \\
\hline
\end{tabular}
\end{table}

\subsection{Acronym Table}
\label{app:acronyms}

We summarize the acronyms and abbreviations used in this paper in Table~\ref{tab:acronyms}.

\begin{table}[t]
\caption{Table of Acronyms and Abbreviations}
\label{tab:acronyms}
\centering
\begin{tabular}{p{2cm}p{5.7cm}}
\hline \hline
Acronym & Meaning \\
\hline
API & Application Programming Interface \\
CBRS & Citizens Broadband Radio Service \\
DTBA & Dynamic Telescope Boresight Avoidance \\
EIRP & Effective Isotropic Radiated Power \\
EPFD & Equivalent Power Flux Density \\
FCC & Federal Communications Commission \\
GPS & Global Positioning System \\
ITU & International Telecommunication Union \\
ITU-R & ITU Radiocommunication Sector \\
Jy & Jansky (unit commonly used for radio astronomy sensitivity / flux density) \\
LEO & Low Earth Orbit \\
MILP & Mixed-Integer Linear Program \\
NRAO & National Radio Astronomy Observatory \\
ODS & Operational Data Sharing \\
PFD & Power Flux Density \\
RAS & Radio Astronomy Services \\
REST & Representational State Transfer \\
RFI & Radio Frequency Interference \\
RoI & Region of Interest \\
SKA & Square Kilometre Array \\
SNR & Signal-to-Noise Ratio \\
SoTA & State of the Art \\
TBA & Telescope Boresight Avoidance \\
TLE & Two-Line Element \\
UTC & Coordinated Universal Time \\
VLBA & Very Long Baseline Array \\
\hline
\end{tabular}
\end{table}

\subsection{EPFD-budgeted Region of Interest (RoI) Algorithm}
\label{app:roi}

Algorithm~\ref{alg:roi-procedure} summarizes the online procedure used to select $(\phi_{TH}, R)$, and the corresponding $\Delta TH_{RAS}$, for a given RAS site,  constellation, and observation configuration. The inputs are the candidate ranges for $\phi_{TH}$ and $R$, the set of time intervals that span $\tau$, i.e., $\T$, the EPFD limit $TH_{RAS}$, Bandwidth $\Delta F$, and observation ODS data. The output is the first feasible $(R,\phi_{TH},\Delta TH_{RAS})$ under the prescribed search order, or a failure return if no feasible configuration is found. The procedure sweeps candidate pairs in increasing order, runs \sysabb over $\tau$, computes the EPFD slack, and returns the implied $\Delta TH_{RAS}$ from the non-RoI EPFD when the slack is positive. Its complexity is linear in the number of tested $(\phi_{TH}, R)$ candidate pairs and is dominated by the repeated \algname executions over $\tau$.

\begin{algorithm}[h]
\caption{Online RoI calculation procedure for selecting $(\phi_{TH}, R, \Delta TH_{RAS})$}
\label{alg:roi-procedure}
\small
\begin{algorithmic}[1]
\REQUIRE Candidate ranges $(0,\phi_{TH}^{\max}]$ and $(0,R_{\max}]$, representative time set $\T$, EPFD limit $TH_{RAS}$, bandwidth $\Delta F$, ODS data
\ENSURE $(R,\phi_{TH},\Delta TH_{RAS})$ if feasible; otherwise “no feasible configuration found”
\FOR{$R$ in increasing order over $(0,R_{\max}]$}
    \FOR{$\phi_{TH}$ in increasing order over $(0,\phi_{TH}^{\max}]$}
        \STATE Run \sysabb over all $t \in \T$ with cells inside the RoI participating in the optimization, and when a satellite is within $\phi_{TH}$ of the RAS boresight, also include all cells visible to that satellite; record $\text{EPFD}_{\text{RoI}}^{(\phi_{TH}, R)}(t)$ and $\text{EPFD}_{\text{Non-RoI}}^{(\phi_{TH}, R)}(t)$
        \STATE Set
        \[
        \text{slack} \;=\; TH_{RAS} - \frac{1}{\Delta F}\max_{t \in \T}\Bigl(\text{EPFD}_{\text{RoI}}^{(\phi_{TH}, R)}(t) + \text{EPFD}_{\text{Non-RoI}}^{(\phi_{TH}, R)}(t)\Bigr)
        \]
        \IF{$\text{slack} > 0$}
            \STATE Set
            \[
            \Delta TH_{RAS} \;=\; \frac{1}{\Delta F}\max_{t \in \T}\text{EPFD}_{\text{Non-RoI}}^{(\phi_{TH}, R)}(t)
            \]
            \STATE \textbf{return} $(R,\phi_{TH},\Delta TH_{RAS})$
        \ENDIF
    \ENDFOR
\ENDFOR
\STATE \textbf{return} “no feasible configuration found”
\end{algorithmic}
\end{algorithm}

\subsection{NP-hardness of Problem \textbf{P0}}
\label{app:nphard}

\begin{proof}[Proof of Theorem~\ref{theorem:nphard}]
    We prove NP-hardness by a reduction from the decision version of Bin Packing. Consider an instance with item sizes $\{a_c\}_{c\in C_s^t}$, bin capacity $\kappa$, and $|S|$ bins. Construct an instance of \textbf{P0} with a single time slot $|\T|=1$ (hence a single window), set $x_d^c=a_c$ for all $c\in C_s^t$, set $N^{sb}=\kappa$, and set $\text{e}_{c,t,s}=0$ for all $(c,t,s)$ so that the EPFD constraint is non-binding. Because $x_{c,t,s}\in\{0\}\cup x_d^c$ and $x_{c,t,s}\le x_d^c z_{c,\lfloor t/N_w \rfloor,s}$ with $z_{c,w,s}\in\{0,1\}$, any feasible assignment satisfies $z_{c,w,s}=1 \Rightarrow x_{c,t,s}\in\{0,x_d^c\}$ and $z_{c,w,s}=0 \Rightarrow x_{c,t,s}=0$; together with $\sum_{s\in S}z_{c,w,s}\le 1$ and $\sum_{s\in S}x_{c,t,s}\le x_d^c$, each $c$ can be assigned to at most one $s$ and, if assigned, must have $x_{c,t,s}=x_d^c$. The satellite-capacity constraint $\sum_c x_{c,t,s}\le N^{sb}$ then enforces $\sum_{c\ \text{assigned to}\ s} a_c \le \kappa$ for every $s\in S$. Finally, the objective $\sum_{c,t,s} x_{c,t,s}/x_d^c$ counts the number of served cells (each term is $0$ or $1$), so achieving value $|C_s^t|$ is possible if and only if all items can be packed into the $|S|$ bins of capacity $\kappa$. Hence Bin Packing reduces to deciding whether \textbf{P0} attains objective at least $|C_s^t|$, implying \textbf{P0} is NP-hard.
\end{proof}

\subsection{Formal definition of \(\textbf{P}_1(q)\)}
\label{app:p1q}

For the special case \(N_w=1\) and \(x_d^c=1\), \(\textbf{P}_1(q)\) is defined as a minimum-interference assignment problem: 

\begin{align}
    \textbf{P1(q): } \min_{x} \quad & \sum_{t \in \T} \sum_{s \in S} \sum_{c \in C_s^t} \text{e}_{c,t,s} x_{c,t,s} \label{eq:mincost-obj}\\
    \text{s.t.} \quad
    & \sum_{s \in S} x_{c,t,s} \leq 1 \quad && \forall\, c\in C^{all},\; t \in \T \label{eq:mincost-c1}\\
    & \sum_{c \in C_s^t} x_{c,t,s} \leq N^{sb} \quad && \forall\, t \in \T,\; s \in S \label{eq:mincost-c2}\\
    & \sum_{t \in \T} \sum_{s \in S} \sum_{c \in C_s^t}  x_{c,t,s} = q   \label{eq:mincost-c3}\\
    & x_{c,t,s} \in \{0, 1\} \label{eq:mincost-c4}
 \end{align}

\subsection{Proofs}
\label{app:proofs}

\subsubsection{Proof of Lemma~\ref{lemma:monotone}, Monotonicity of $\text{EPFD}^\star(q)$)}
\label{app:non-decreasing}
\begin{proof}
Assume by way of contradiction there exists $q$, $q'$, with $q < q'$ such that $\text{EPFD}^\star(q') < \text{EPFD}^\star(q)$. Now select any $(q'-q)$ active elements from the solution for $q'$, and remove them; the result will be a feasible assignment for q; call the remaining interference $\hat{\text{EPFD}}(q)$. Since interference values are non-negative, we get: $\hat{\text{EPFD}}(q) \leq \text{EPFD}^\star(q') < \text{EPFD}^\star(q)$, which results in $\hat{\text{EPFD}}(q) < \text{EPFD}^\star(q)$, which is impossible since $\text{EPFD}^\star(q)$ is the minimum achievable $\text{EPFD}$ for $q$. Hence, the contradiction assumption is wrong, and $\text{EPFD}^\star(q)$ is non-decreasing in $q$.    
\end{proof}

\subsubsection{Proof of Theorem \ref{theorem:q-star-optimal}, Optimality of $q^\star$}
\label{app:theorem6-2}
\begin{proof}
Feasibility: for every $q \leq q^\star$, a solution to problem $\textbf{P}_1(q)$ such that
$\text{EPFD}^\star(q) <\Delta F\frac{\tau}{\Delta t}\,\overline{TH}_{RAS}$
satisfies every constraint in the original problem \textbf{P0}, hence it is also a feasible solution to our original problem \textbf{P0}.

Optimality: For any $q > q^\star$, from \text{\large L}\text{\small EMMA}~\ref{lemma:monotone}, we have $\text{EPFD}^\star(q) \geq \text{EPFD}^\star(q^\star)$. Also, from the definition of $q^\star$ we can see that $\text{EPFD}^\star(q) > \Delta F\frac{\tau}{\Delta t}\,\overline{TH}_{RAS}$. Hence, no $q > q^\star$ results in a feasible solution.
As a result, the problem \textbf{P0} cannot achieve more than $q^\star$, and it achieves $q^\star$, completing the proof.
\end{proof}

\subsubsection{Proof of Lemma~\ref{lemma:one-to-one}}
\label{app:one-to-one}

\begin{proof}
    Given (x), create a unit of flow
    along the path (r to $v_{c,t}$ to $u_{s,t}$ to $d$)
    for every triple with $x_{c,t,s}=1$.
    Because Eq.~\ref{eq:mincost-c3} forces exactly
    (q) Such triples, supply/demand is respected.
    Constraint (\ref{eq:mincost-c1}) implies $\text{cap}(r\to v_{c,t})$ is never exceeded,
    and constraint (\ref{eq:mincost-c2}) results in
    \(
    \sum_{c} x_{c,t,s}\le N^{sb}
    \Rightarrow
    \text{cap}(u_{s,t}\to q)\) never being exceeded.
    Hence, the resulting (f) is a feasible flow. Conversely, an integral (q)-unit flow must send exactly
    one unit through each arc (r to $v_{c,t}$) it uses
    (because the capacity is 1) and can only exit
    ($v_{c,t}$) via a single arc to some ($u_{s,t}$).
    Define ($x_{c,t,s}=1$) if that arc carries flow,
    and (0) otherwise.
    Arc capacities guarantee constraints (\ref{eq:mincost-c1}) and (\ref{eq:mincost-c2}) of Problem $\textbf{P}_1(q)$. Finally, the integrality of the flow implies
    $x_{c,t,s} \in \{0,1\}$.
\end{proof}

\subsubsection{Proof of Theorem~\ref{theorem:equiv}}
\label{app:flow_equiv}
\begin{proof}
    By \text{\large L}\text{\small EMMA}~\ref{lemma:one-to-one}, every feasible assignment $x$ corresponds to a unique integral flow $f$, and vice versa.
    Arc $v_{c,t}\to u_{s,t}$ bears cost
    $\text{e}_{c,t,s}$; therefore    
    \[
    \text{cost}(f)=
    \sum_{c,t,s}\text{e}_{c,t,s}\,x_{c,t,s}.
    \]    
    The objective functions coincide; hence, it suffices to show that the optimal solution to the Min-Cost Flow problem on $G^f$ is integral. From \cite{AhujaMagnantiOrlin1993-Theorem9.10}, the optimal solution to the Min-Cost Flow problem for integer demand and capacity is integral; since all capacities and demands are integral in Flow Network $G^f$, we get $\min_f\text{cost}(f)=\min_X\text{EPFD}(X)=\text{EPFD}^\star(q)$.
\end{proof}

\subsection{Algorithms}
\label{app:alg}

\subsubsection{\textbf{\algname for the special case.}}
Algorithm~\ref{alg:quietsat} gives the full \algname procedure for the special case $N_w=1$ and $x_d^c=1$. \textit{Inputs} are the constellation TLEs, the candidate cell sets $C_s^t$, the time set $\T$, the satellite set $S$, the observatory tuple $\text{RO}$, and the EPFD budget $\Delta F\frac{\tau}{\Delta t}\overline{TH}_{RAS}$. \textit{Outputs} are the assignment $x^\star$ and the maximum feasible served count $q^\star$. \textit{Key steps} are: (i) pre-computing the EPFD table $\text{e}_{c,t,s}$ for all visible satellite--cell pairs, (ii) binary search over $q$, and (iii) invoking Algorithm~\ref{alg:mcf} to test feasibility under the EPFD budget and update $(x^\star,q^\star)$.

\begin{algorithm}[h]
\caption{\algname Algorithm for $N_w=1, x_d^c=1$}
\label{alg:quietsat}
\small
\begin{algorithmic}[1]
\REQUIRE TLE files for constellation; cell set $C_s^t$; Time intervals set $\T$;
         satellite set $S$; window length $N_w$;\\
         Radio Observatory tuple $\text{RO}$;
         \\Interference limit $\Delta F\frac{\tau}{\Delta t} \overline{TH}_{RAS}$
\STATE \textbf{Pre‑processing: generate EPFD table}\label{line:preproc}
\FORALL{$t\in \T$}
   \FORALL{$s\in S$}
      \STATE Propagate TLE of $s$ to time interval $t$
   \ENDFOR
   \FORALL{$s \in S$}
       \FORALL{$c\in C_s^t$}  
         \STATE $\phi_{t,s}\gets$ off‑axis angle
                $(s,\text{RO.\,dir})$
         \STATE $\theta_{c,t,s}\gets$ off‑axis angle
                $(s,c,\text{RO})$
         \STATE $G_{\text{R\text{x}}}\!\leftarrow
                G_{R\text{x}}(\phi_{t,s})$  \COMMENT{RA antenna}
         \STATE $G_{\text{T\text{x}}}\!\leftarrow
                G_{\text{T\text{x}}}(\theta_{c,t,s})$
                \COMMENT{Satellite Antenna}
         \STATE $\text{e}_{c,t,s} \gets
                \text{PFD}(\theta_{c,t,s})\,
                G_{R\text{x}}(\phi_{t,s})/G_{R\text{x},max}$
                \hfill
      \ENDFOR
   \ENDFOR
\ENDFOR

\STATE $L\gets0$;\; $U\gets \max_{s,t}(|C_s^t|\,|\mathcal{T}|$);\;
       $q^\star\gets0$
\WHILE{$L\le U$}                                
   \STATE $q\gets\lfloor(L+U)/2\rfloor$
   \STATE $(\text{cost},x)\gets\textsc{MinCostFlow}(q)$
   \IF{$\text{cost}\le \Delta F\frac{\tau}{\Delta t} . \overline{TH}_{RAS}$}
       \STATE $q^\star\gets q$;\; $x^\star\gets x$;\; $L\gets q+1$
   \ELSE
       \STATE $U\gets q-1$
   \ENDIF
\ENDWHILE
\RETURN $(x^\star,q^\star)$
\end{algorithmic}
\end{algorithm}


\subsubsection{\textbf{Min-cost flow subroutine for fixed $q$.}}
Algorithm~\ref{alg:mcf} solves the fixed-demand subproblem used by Algorithm~\ref{alg:quietsat}. \textit{Inputs} are the flow demand $q$ and the pre-computed data (including $\text{e}_{c,t,s}$ and the visible satellite--cell pairs). \textit{Outputs} are $\text{EPFD}^\star(q)$ and the corresponding assignment $x$. \textit{Key steps} are: constructing the flow graph $G^f$, setting source/sink supplies to $q$, solving the min-cost flow instance, and extracting $x_{c,t,s}$ from the resulting flow paths. \textit{Complexity} consists of graph construction and extraction (linear in the graph size) and one min-cost-flow solve, which is the dominant step; the overall solver-dependent cost is the same min-cost-flow complexity used in the time-complexity discussion in \S\ref{sub:algoverview}.

\begin{algorithm}[h]
\caption{\textsc{MinCostFlow}$(q)$}
\label{alg:mcf}
\small
\begin{algorithmic}[1]
\REQUIRE Flow demand $q$ and the data used by Algorithm~\ref{alg:quietsat}
\ENSURE Minimum cost $\text{EPFD}^\star(q)$ and assignment $x$
\STATE Build the network flow graph $G^f = (N, A)$ 
\STATE Set supplies: $\text{supply}(r)\gets +q$, $\text{supply}(d)\gets -q$
\STATE $f \gets \text{MCFSolver}(G^f)$
\STATE $\text{EPFD}^\star(q) \gets \sum_{a\in A} cost(a)\,f(a)$
\STATE Extract $x_{c,t,s}$ from every path  
       $r \rightarrow v_{c,t} \rightarrow u_{s,t} \rightarrow d$  
       that carries one unit of flow in $f$
\RETURN $(\text{EPFD}^\star(q)$,$x$)
\end{algorithmic}
\end{algorithm}

\subsubsection{\textbf{\algname heuristic for arbitrary demand.}}
Algorithm~\ref{alg:quietsat_xdc_heuristic} describes the \algname heuristic for the general case with arbitrary $x_d^c$. \textit{Inputs} are candidate edges $(c,t,s)$, demands $\{x_d^c\}$, the spot-beam limit $N^{sb}$, EPFD costs $\{\text{e}_{c,t,s}\}$, the total EPFD budget in \textbf{P0}, and the iteration cap $I_{\max}$. \textit{Outputs} are a feasible assignment $(x,z)$ for \textbf{P0}. \textit{Key steps} are: constructing a scaled flow graph, solving a scaled min-cost flow, calling Algorithm~\ref{alg:quietsat_xdc_prune} to repair satellite overloads, updating satellite-side capacities based on observed utilization, and repeating until convergence or the iteration cap is reached. \textit{Complexity} is dominated by at most $I_{\max}$ scaled min-cost-flow solves, with additional per-iteration repair and capacity-updates.

\label{app:heuristic}
\begin{algorithm}[h]
\caption{\algname heuristic with arbitrary $x_d^c,\;N_w=1$}
\label{alg:quietsat_xdc_heuristic}
\small
\begin{algorithmic}[1]
\REQUIRE Candidate edges $(c,t,s)$, demands $\{x_d^c\}$, spot beam limit $N^{sb}$, EPFD costs $\{\text{e}_{c,t,s}\}$, total EPFD budget in \textbf{P0}, iteration cap $I_{\max}$.
\ENSURE Feasible assignment $(x,z)$ for \textbf{P0}.

\STATE Construct scaled network; set $\texttt{cap}(r,v_{c,t})\leftarrow 1$, $\forall(c,t)$.
\STATE Set $cost(v_{c,t},u_{s,t}) \leftarrow x_d^c\,\text{e}_{c,t,s}$, $\forall(c,t,s)$.
\STATE Compute $\overline{x_d}$ and set $\texttt{cap}(u_{s,t},d)\leftarrow \left\lfloor \dfrac{N^{sb}}{\overline{x_d}} \right\rfloor$, $\forall(t,s)$.
\STATE Initialize $\mathcal{F}\leftarrow \emptyset$ and $\mathcal{A}^\star \leftarrow \emptyset$.

\FOR{$i=1$ \TO $I_{\max}$} 
    \STATE Solve the scaled min-cost flow under the EPFD budget on the graph excluding edges in $\mathcal{F}$, obtaining $\mathcal{A}=\{(c,t,s)\}$.
    \STATE $(\mathcal{A}_{\text{feas}},\mathcal{F}) \leftarrow \textsc{PruneOverload}(\mathcal{A},\mathcal{F},\{x_d^c\},N^{sb},\{\text{e}_{c,t,s}\})$.
    \IF{$|\mathcal{A}_{\text{feas}}| > |\mathcal{A}^\star|$}
        \STATE $\mathcal{A}^\star \leftarrow \mathcal{A}_{\text{feas}}$.
    \ENDIF

    \STATE Compute $\rho_{s,t}\leftarrow \sum_{(c,t,s)\in \mathcal{A}} x_d^c$, $\forall(t,s)$.
    \FORALL{$(t,s)$}
        \IF{$\rho_{s,t} > N^{sb}$}
            \STATE $\texttt{cap}(u_{s,t},d) \leftarrow \min\!\Big\{\texttt{cap}(u_{s,t},d),\; \big|\{(c,t,s)\in \mathcal{A}_{\text{feas}}\}\big|\Big\}$.
        \ELSE
            \STATE $\texttt{cap}(u_{s,t},d) \leftarrow \texttt{cap}(u_{s,t},d) + \left\lfloor \dfrac{N^{sb}-\rho_{s,t}}{\overline{x_d}} \right\rfloor$.
        \ENDIF
    \ENDFOR

    \IF{$\mathcal{A}_{\text{feas}}=\mathcal{A}$}
        \STATE \textbf{break}
    \ENDIF
\ENDFOR

\STATE Set $z_{c,t,s}=1$ iff $(c,t,s)\in \mathcal{A}^\star$, else $0$.
\STATE Set $x_{c,t,s}=x_d^c$ iff $(c,t,s)\in \mathcal{A}^\star$, else $0$.
\RETURN $(x,z)$.
\end{algorithmic}
\end{algorithm}

\subsubsection{\textbf{Feasibility-repair helper.}}
Algorithm~\ref{alg:quietsat_xdc_prune} is the repair helper used by Algorithm~\ref{alg:quietsat_xdc_heuristic} to enforce the true spot-beam limit after a scaled flow solve. \textit{Inputs} are the selected assignment set $\mathcal{A}$, the forbidden set $\mathcal{F}$, demands $\{x_d^c\}$, the spot-beam limit $N^{sb}$, and EPFD costs $\{\text{e}_{c,t,s}\}$. \textit{Outputs} are the repaired feasible set $\mathcal{A}_{\text{feas}}$ and the updated forbidden set $\mathcal{F}$. \textit{Key steps} are: identifying overloaded $(t,s)$ pairs, sorting the selected assignments for each overloaded pair by $\text{e}_{c,t,s}$, and removing assignments until the total demand fits within $N^{sb}$. \textit{Complexity} is a scan over selected assignments plus sorting within overloaded $(t,s)$ groups.

\begin{algorithm}[!t]
\caption{\algname helper: prune overloaded $(t,s)$ pairs}
\label{alg:quietsat_xdc_prune}
\small
\begin{algorithmic}[1]
\REQUIRE $\mathcal{A}$, forbidden set $\mathcal{F}$, demands $\{x_d^c\}$, limit $N^{sb}$, costs $\{\text{e}_{c,t,s}\}$.
\ENSURE $(\mathcal{A}_{\text{feas}},\mathcal{F})$.

\STATE $\mathcal{A}_{\text{feas}} \leftarrow \mathcal{A}$.
\FORALL{$(t,s)$ with $\sum_{(c,t,s)\in \mathcal{A}_{\text{feas}}} x_d^c > N^{sb}$}
    \STATE $\mathcal{L}\leftarrow \{(c,t,s)\in \mathcal{A}_{\text{feas}}\}$.
    \STATE Order $\mathcal{L}$ by increasing $\text{e}_{c,t,s}$ (lowest first).
    \WHILE{$\sum_{(c,t,s)\in \mathcal{L}} x_d^c > N^{sb}$}
        \STATE Remove the next $(c,t,s)$ from $\mathcal{A}_{\text{feas}}$ and add it to $\mathcal{F}$.
        \STATE $\mathcal{L}\leftarrow \mathcal{L}\setminus \{(c,t,s)\}$.
    \ENDWHILE
\ENDFOR
\RETURN $(\mathcal{A}_{\text{feas}},\mathcal{F})$.
\end{algorithmic}
\end{algorithm}

\subsection{Background on CelesTrak Element Sets}
\label{app:leo_supgp_background}

Supplemental General Perturbations (SupGP) are provided by CelesTrak with element sets (GPEs) derived directly from owner- and operator-supplied orbital data. All GP data are generated from radar and optical observations collected by the U.S. Space Surveillance Network (SSN) \cite{snn_spaceforce}; however, the limited geographic coverage of optical sensors and weather-induced observation gaps can lead to delayed catalog updates and degraded orbital accuracy. To mitigate these limitations, CelesTrak integrates operator-supplied ephemeris data into its standard update pipeline by fitting them to GPEs using the Simplified General Perturbations 4 (SGP4) orbital model implemented in Satellite Tool Kit (STK) \cite{vallado2006revisiting}. CelesTrak conducted real testbed experiments to validate the satellite tracking accuracy and
demonstrated \textit{centimeter-level accuracy} \cite{Celestrak}.

\subsection{Network Parameters of LEO Satellite Simulation} \label{app:table}

We summarize the network parameters used in this study in Table \ref{Tab2}.

\begin{table}[!b]
\caption{Network Parameters of LEO Satellites} \label{Tab2}
\centering
\begin{tabular}{p{4.2cm}p{3.5cm}}
\hline \hline
    Parameter Settings & Values \\
    \hline
    Satellite Constellation & Starlink Gen2-mini\\
    Constellation size  &  $9521$ satellites\\
    Gen2 satellite capacity & $96$ Gbps \\
    Channel division & 8 channels \\
    Bandwidth & $250$ MHz per channel\\
    Frequency reuse factor & $2$ polarizations\\
    Max EIRP & $44.8$ dBw \\
    Main lobe antenna gain & $34$ dBi\\
    Spot beams of each satellite & $80$\\
    Starlink DL antenna array size & $25 \times 40$\\
    Cell area & $252.9 \text{ km}^2$\\
    Discretized interval $\Delta t$ & $1$ s \\
    Default RAS latitude & $40^\circ$ \\
    Default RAS diameter & $30$ m \\
\hline 
\end{tabular}
\end{table}

\end{document}